\newtheorem{prop}{Proposition}
\newtheorem{cor}{Corollary}
\newtheorem{lm}{Lemma}
\newtheorem{thm}{Theorem}
\newcommand{\bal}{\begin{align}}
\newcommand{\eal}{\end{align}}
\newcommand{\be}{\begin{eqnarray}}
\newcommand{\ee}{\end{eqnarray}}
\newcommand{\benn}{\begin{eqnarray*}}
\newcommand{\eenn}{\end{eqnarray*}}
\def\IR{\rm I \kern-0.20em R}
\newcommand{\utwi}[1]{\mbox{\boldmath $ #1$}}
\newcommand{\bthm}{\begin{thm}}
\newcommand{\ethm}{\end{thm}}
\newcommand{\bcor}{\begin{cor}}
\newcommand{\ecor}{\end{cor}}
\newcommand{\bprop}{\begin{prop}}
\newcommand{\eprop}{\end{prop}}
\newcommand{\blm}{\begin{lm}}
\newcommand{\elm}{\end{lm}}
\newcommand{\beq}{\begin{equation}}
\newcommand{\eeq}{\end{equation}}
\newcommand{\ber}{\begin{eqnarray}}
\newcommand{\eer}{\end{eqnarray}}
\newcommand{\bseq}{\begin{subequations}}
\newcommand{\eseq}{\end{subequations}}
\newcommand{\bproof}{\begin{proof}}
\newcommand{\eproof}{\end{proof}}
\renewcommand{\i}{\imath}
\renewcommand{\j}{\jmath}
\newcommand{\gm}{\gamma}
\newcommand{\sm}{\sigma}
\newcommand{\Sm}{\mathnormal\Sigma}
\newcommand{\al}{\alpha}
\newcommand{\om}{\omega}
\newcommand{\diag}{\mathop{\mbox{\rm diag}}}
\newcommand{\tr}{\mathop{\mbox{\rm Tr}}}
\newcommand{\bit}{\begin{itemize}}
\newcommand{\eit}{\end{itemize}}
\newcommand{\ben}{\begin{enumerate}}
\newcommand{\een}{\end{enumerate}}
\newcommand{\bdesc}{\begin{description}}
\newcommand{\edesc}{\end{description}}
\newcommand{\beqarrn}{\begin{eqnarray*}}
\newcommand{\eeqarrn}{\end{eqnarray*}}
\newcommand{\bproofof}{\begin{proofof}}
\newcommand{\eproofof}{\end{proofof}}
\newenvironment{rem}{\begin{trivlist}\item[]{\bf
Remark:}\hspace{4mm}}{\end{trivlist}}
\newcommand{\brem}{\begin{rem}}
\newcommand{\erem}{\end{rem}}
\newenvironment{rems}{\begin{trivlist}\item[]{\bf
Remarks}\begin{itemize}}{\end{itemize}\end{trivlist}}
\newcommand{\brems}{\begin{rems}}
\newcommand{\erems}{\end{rems}}
\newtheorem{fact}{Fact}
\newcommand{\bfact}{\begin{fact}}
\newcommand{\efact}{\end{fact}}
\newtheorem{examp}{Example}
\newcommand{\bexamp}{\begin{examp}\rm}
\newcommand{\eexamp}{\end{examp}}
\newtheorem{defn}{Definition}
\newcommand{\bdefn}{\begin{defn}\rm}
\newcommand{\edefn}{\end{defn}}
\newtheorem{alg}{Algorithm}
\newcommand{\balg}{\begin{alg}}
\newcommand{\ealg}{\end{alg}}
\newtheorem{prob}{Problem}
\newcommand{\bprob}{\begin{prob}}
\newcommand{\eprob}{\end{prob}}
\newcommand{\bvtm}{\begin{verbatim}}
\newcommand{\bfig}{\begin{figure}}
\newcommand{\efig}{\end{figure}}
\newcommand{\bcen}{\begin{center}}
\newcommand{\ecen}{\end{center}}
\def\Bbb{\mathbb }
\def\eptn {{\Bbb E}}
\long\def\comment#1{}
\def\lb{\lambda}
\def \n2{{N_0 \over 2}}
\def \h5{\hspace{0.5in}}
\newcommand{\bee}{{\boldsymbol{e}}}
\newcommand{\bff}{{\boldsymbol{f}}}
\newcommand{\bg}{{\boldsymbol{g}}}
\newcommand{\bh}{{\boldsymbol{h}}}
\newcommand{\bn}{{\boldsymbol{n}}}
\newcommand{\bp}{{\boldsymbol{p}}}
\newcommand{\br}{{\boldsymbol{r}}}
\newcommand{\bu}{{\boldsymbol{u}}}
\newcommand{\bv}{{\boldsymbol{v}}}
\newcommand{\bw}{{\boldsymbol{w}}}
\newcommand{\bx}{{\boldsymbol{x}}}
\newcommand{\by}{{\boldsymbol{y}}}
\newcommand{\bz}{{\boldsymbol{z}}}
\newcommand{\bA}{{\boldsymbol{A}}}
\newcommand{\bB}{{\boldsymbol{B}}}
\newcommand{\bC}{{\boldsymbol{C}}}
\newcommand{\bD}{{\boldsymbol{D}}}
\newcommand{\bF}{{\boldsymbol{F}}}
\newcommand{\bG}{{\boldsymbol{G}}}
\newcommand{\bH}{{\boldsymbol{H}}}
\newcommand{\bI}{{\boldsymbol{I}}}
\newcommand{\bP}{{\boldsymbol{P}}}
\newcommand{\bQ}{{\boldsymbol{Q}}}
\newcommand{\bS}{{\boldsymbol{S}}}
\newcommand{\bT}{{\boldsymbol{T}}}
\newcommand{\bW}{{\boldsymbol{W}}}
\newcommand{\bX}{{\boldsymbol{X}}}
\newcommand{\bSm}{\boldsymbol{\Sigma}}
\newcommand{\bDt}{\boldsymbol{\Delta}}
\newcommand{\bOm}{\boldsymbol{\Omega}}
\newcommand{\bPsi}{{\utwi{\mathnormal\Psi}}}
\tikzstyle{block}=[draw opacity=0.7,line width=1.4cm]
\newtheorem{Theorem}{Theorem}
\newtheorem{Proposition}{Proposition}
\theoremstyle{remark}
\title{Wireless MIMO Switching: Weighted Sum Mean Square Error and Sum Rate Optimization}
\author{Fanggang~Wang, Xiaojun~Yuan, Soung Chang~Liew and Dongning~Guo\\
\thanks{This work was presented in part at the IEEE International
Symposium on Information Theory, Cambridge, MA, July 2012 \cite{wms12isit}.}
}
\begin{document}
\IEEEoverridecommandlockouts

\maketitle
\thispagestyle{headings}
\setcounter{page}{1}
\pagestyle{headings}

\begin{abstract} \label{abs}
This paper addresses  joint transceiver and relay design for a wireless multiple-input-multiple-output (MIMO) switching scheme that enables data exchange among multiple users. Here, a multi-antenna relay linearly precodes the received (uplink) signals from multiple users before forwarding the signal in the downlink, where the purpose of precoding is to let each user receive its desired signal with interference from other users suppressed. The  problem of optimizing the precoder based on various design criteria is typically non-convex and  difficult to solve. The main contribution of this paper is a unified  approach to solve the weighted sum mean square error (MSE) minimization and weighted sum rate maximization problems in MIMO switching. Specifically,  an iterative algorithm is proposed for jointly optimizing the relay's precoder  and the users' receive filters to minimize the weighted sum MSE. It is also shown that the weighted sum rate maximization problem can be reformulated as an iterated weighted sum MSE minimization problem and can therefore be solved similarly to the case of weighted sum MSE minimization. With properly chosen initial values, the proposed iterative algorithms are asymptotically optimal in both high and low signal-to-noise ratio (SNR) regimes for MIMO switching, either with or without self-interference cancellation (a.k.a., physical-layer network coding). Numerical results show that the optimized MIMO switching scheme based on the proposed algorithms significantly outperforms  existing approaches in the literature.

\end{abstract}

\begin{IEEEkeywords}
Beamforming, linear precoding, MIMO switching, minimum mean square error (MMSE), physical-layer network coding,  relay.
\end{IEEEkeywords}

\section{Introduction}

Physical-layer network coding (PNC) has received much attention in recent years \cite{Zhang06physical-layernetwork}. The simplest communication model for PNC is a two-way relay channel, in which two users accomplish bidirectional data exchange in two phases of transmission with the help of a relay. Significant progress has been made in approaching the capacity of two-way relay channel (see \cite{Zhang06physical-layernetwork,Bor07,rui09,ding10,meixia11j,yonghui,soung12,xiaojun12twireless,xiaojun13tcom} and the references therein).

Multi-way relaying, in which multiple users exchange data via a single relay, has been studied more recently \cite{ wms12jsac,Joung10,ey10,moh09,Cui08,aut10,guo12jsac,den09x,Cadambe12,Ong11,Ong12}. In \cite{den09x,Cadambe12,Ong11,Ong12}, the relay is equipped with a single antenna. The use of multiple antennas at the relay provides extra spatial degrees of freedom that can boost throughput significantly. A multi-antenna relay that performs one-to-one mapping from the inputs to the outputs (i.e., to switch traffic in a one-to-one manner among the end users) is called a MIMO switch \cite{wms12jsac,moh09}. Various traffic patterns have been studied in MIMO relaying, including pairwise data exchange \cite{moh09,Joung10,ey10}, where the users form pairs and data exchange is within each pair, and full data exchange, where each user broadcasts to all other users \cite{Cui08,aut10,guo12jsac}. Reference \cite{wms12jsac} further generalized pairwise data exchange to arbitrary unicast, in which each user sends data to one other user and could receive data from a different user. Arbitrary unicast is interesting because any traffic pattern, including unicast, multicast, broadcast, or any mixture of them, can be realized by scheduling  a sequence of unicast flows.

Joint transceiver and relay design for MIMO switching with simultaneous unicast has been reported in \cite{aut10,Joung10,ey10,wms12jsac}. Zero-forcing relaying was first proposed in \cite{Joung10} to realize pairwise data exchange. Zero-forcing relay with PNC, which employs self-interference cancellation, can improve system throughput considerably \cite{wms12jsac}. However, zero-forcing involves channel inverse operations that incur significant power penalties when the channel gain matrix is ill-conditioned. To alleviate power penalties, minimum mean square error (MMSE) relaying was proposed in \cite{Joung10} and \cite{aut10}, which achieves better performance for practical signal-to-noise ratios (SNRs). All preceding works aim at suppressing the interference or the MSE.

In this paper, two families of optimization problems, namely, weighted sum MSE minimization and weighted sum rate maximization are  tackled under one unified framework. The sum rate metric is directly related to user experience, and hence as important as the MSE metric. Specifically, to minimize weighted sum MSE, an algorithm can be devised to iteratively optimize the relay's precoder and the users' receive filters. We  show that the weighted sum rate maximization problem is converted to an iterated weighted sum MSE minimization problem, which admits an iterative solution. In the low and high SNR regimes, analytical results are provided on the properties of the asymptotically optimal solutions and the convergence conditions of the proposed algorithms. Numerical results show that the proposed algorithms significantly outperform existing approaches in \cite{aut10,Joung10,wms12jsac}.


The following notational convention is adopted throughout this paper: Scalars are in normal fonts; boldface lower-case letters denote vectors and boldface upper-case letters denote matrices; $\diag\{\bx\}$ denotes a diagonal square matrix whose diagonal consists of  the elements of $\bx$; $\diag\{\bX\}$ denotes a column vector formed by the diagonal elements of $\bX$. $[\bX]_{\text{diag}}$ represents a diagonal matrix with the same diagonal elements as $\bX$. Denote by $\mathcal{C}_{a,b}$ the covariance of two zero-mean random variables $a$ and $b$, i.e., $\mathcal{C}_{a,b}=\eptn[ab^*]$. The operation $vec(\bX)$ is to stack each column of the matrix $\bX$ on top of the right adjacent column; $mat(\cdot)$ is the inverse operator of $vec(\cdot)$; $(\cdot)^\dag$ denotes Moore-Penrose pseudo inverse \cite{roger90}; and $\otimes$ denotes the Kronecker product.

The remainder of the paper is organized as follows: Section II introduces the background of wireless MIMO switching. Weighted sum MSE minimization and weighted sum rate maximization are discussed in Section III and IV, respectively. In Section V, asymptotically optimal solutions  are derived. Section VI presents simulation results. Section VII concludes this paper.

\section{System Description} \label{sec.SystemDes}

The system model is illustrated in Fig.~\ref{diagram}. There are $K$ users, numbered from $1$ to $K$, each equipped with a single antenna. These users communicate via a relay with $N$ antennas and there is no direct link between any two users. Throughout the paper, we focus on the pure unicast case, in which each user transmits to one other user only.
Let $\pi(\cdot)$ specify a switching pattern, which can be represented as follows: user $i$ transmits to $j=\pi(i)$ for every $i\in\{1,\cdots,K\}$. The pure unicast switching pattern can be equivalently represented by a permutation matrix $\bP$.
Let $\bee_j$ denote the $j$th column of an identity matrix. Then the $i$th column of $\bP$ is equal to $\bee_j$ if $\pi(i)=j$, i.e., $\bp_i=\bee_{\pi(i)}=\bee_j$. If the diagonal elements of permutation $\bP$ are all zero, it is also called a {\em derangement}. In particular, a symmetric derangement ($\bP=\bP^T$) realizes a pairwise data exchange.
In general, any traffic flow pattern among the users can be realized by scheduling a set of different unicast traffic flows \cite{wms12jsac}.

\tikzstyle{place}=[circle,draw=blue!50,fill=blue!20,thick,
inner sep=0pt,minimum size=6mm]
\tikzstyle{transition}=[rectangle,draw=black!100,fill=none,semithick,
inner sep=0pt,minimum size=6mm]
\tikzstyle{mimoswitch}=[rectangle,draw=black!100,fill=none,semithick,
inner sep=0pt,minimum width=12mm, minimum height=26mm]
\tikzstyle{relay}=[rectangle,draw=black!100,fill=none,semithick,
inner sep=0pt,minimum width=25mm, minimum height=10mm]

\def\antenna{%
    -- +(0mm,4.0mm) -- +(2.625mm,7.5mm) -- +(-2.625mm,7.5mm) -- +(0mm,4.0mm)
}

\begin{figure}
\centering
\begin{tikzpicture}[node distance=10em]
\node (relay) [relay] {Relay};
\node (1) [above right of=relay] [transition] {1};
\node (2) [above left of=relay] [transition] {2};
\node (3) [below left of=relay] [transition] {3};
\node (N) [below right of=relay] [transition] {$K$};

\draw[color=black,semithick] (1.north) \antenna;
\draw[color=black,semithick] (2.north) \antenna;
\draw[color=black,semithick] (3.north) \antenna;
\draw[color=black,semithick] (N.north) \antenna;
\path (relay.north west) to node (a) [pos=.2,inner sep=0] {} (relay.north) to node (b) [pos=.8,inner sep=0] {} (relay.north east);
\draw[color=black,semithick] (a) \antenna  (b) \antenna;

\draw [dashed] (-.8,.8) to node [above]  {$N$} (.8,.8);
\draw [color=blue, dashed, semithick, ->] (1.3,1.5) -- (1.9,2.1); \draw [semithick, ->] (2.1,1.9) -- (1.5,1.3);
\draw [color=blue, dashed, semithick, ->] (-1.3,1.5) -- (-1.9,2.1); \draw [semithick, ->] (-2.1,1.9) -- (-1.5,1.3);
\draw [color=blue, dashed, semithick, ->] (1.3,-0.8) -- (1.9,-1.4); \draw [semithick, ->] (2.1,-1.2) -- (1.5,-0.6);
\draw [color=blue, dashed, semithick, ->] (-1.3,-0.8) -- (-1.9,-1.4); \draw [semithick, ->] (-2.1,-1.2) -- (-1.5,-0.6);

\draw [color=blue, dashed, semithick, ->] (-2.8,-4.2) -- (-1.7,-4.2); \draw [semithick, ->] (-2.8,-3.7) -- (-1.7,-3.7);
\node at (0.2,-3.7) {\small{Uplink phase}};
\node at (0.4,-4.2) {\small{Downlink phase}};

\draw [dashed] (-.5,-1.5) to node [auto] {} (.5,-1.5);
\end{tikzpicture}
\caption{Wireless MIMO switching.}
\label{diagram}
\end{figure}
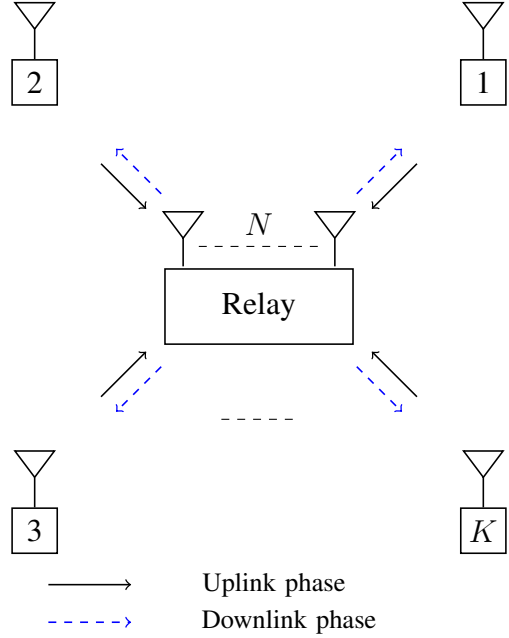

Each round of data exchange consists of one uplink phase and one downlink phase. The uplink phase sees simultaneous transmissions from the users to the relay; the downlink phase sees one transmission from the relay to the users. We assume that the two phases are of equal duration.

In the uplink phase, let $\bx = [x_1, \cdots, x_K]^T$ be the vector representing the signals transmitted by the users. Let $\by=[y_1, \cdots, y_N]^T$ be the received signals at the relay's antennas, and $\bu=[u_1, \cdots, u_N]^T$  be the noise vector with independent and identically distributed (i.i.d.) samples following the circularly-symmetric complex Gaussian (CSCG) distribution, denoted by $\mathcal{CN}(0, \gamma^2)$, where $\gm^2$ is the noise variance at the relay. Then
\begin{equation} \label{formula_uplink}
\by=\bH \bx +\bu,
\end{equation}
where $\bH\in\mathbb{C}^{N\times K}$  is the uplink channel matrix. We assume that all uplink signals are independent Gaussian with zero mean and  their powers are constrained as 
\be
\mathbb{E}\{|x_i|^2\} =q_i \leq Q_i,\quad i=1,\cdots,K. 
\ee 

Upon receiving $\by$, the relay precodes $\by$ with matrix $\bG$ and forward $\bG\by$ in the downlink phase, where the transmit power of the relay is upper-bounded by $P_r$, i.e.
\be
\tr\left[\bG\left(\bH\bQ\bH^H+\gm^2\bI\right)\bG^H\right]\leq P_r,
\ee
where $\bQ = \diag\{q_1,\cdots,q_K\}$.

The signals received by all users are collectively represented in the vector form as
\begin{equation} \label{r}
 \br = \bF \bG\by + \bw = \bF \bG\bH \bx + \bF \bG\bu  + \bv,
\end{equation}
where $\bF$ is the downlink channel matrix. Note that we assume the channel matrices $\bH$ and $\bF$ are either full row rank or column rank, whichever is less. $\bv$ is the noise vector at the receivers, with i.i.d.  samples following the CSCG distribution, i.e., $ v_k \sim \mathcal{CN}(0, \sigma^2)$, where $\sm^2$ is the noise variance. The signal $r_{\pi(i)}$ received by user $\pi(i)$ is used to recover the message from user $i$. Each user scales its received signal before estimation. We use a diagonal matrix $\bC$ to denote the combination of all the scaling factors. It is not difficult to see that interference-free switching can be achieved in the absence of noise by designing $\bG$ to ensure $\bP^T\bC\bF\bG\bH$ is diagonal. In the presence of noise, $\bG$ and $\bC$ are chosen to optimize certain performance metric.
The MIMO precoder $\bG$ shall be chosen to manage interference based on the knowledge of the uplink and downlink channels $\bH$ and $\bF$. Furthermore, physical-layer network coding technique can be used if the users can cancel self-interference in the received signal. We refer to the precoding schemes without network coding as \emph{non-PNC schemes}, and subsequent ones with network coding as \emph{PNC schemes}.


\section{Weighted Sum MSE Minimization}

\subsection{Problem Formulation}

We start with the weighted sum MSE minimization problem. Each user $j$ aims to estimate the signal $x_i$ from user $i$ based on the received signal $r_j$, where $j = \pi(i)$. For Gaussian signaling and the linear system in \eqref{r}, the MMSE estimator of $x_i$ is given by $\hat x_i = c_j r_j$, and the resulting MSE is given by $|x_i - \hat x_i|^2$, where $c_j$ is a scalar coefficient. Then, the overall weighted sum MSE can be compactly written as $\|\bW^{\frac12} (\bP\bx-\bC\br)\|^2$, where $\bW = \diag\{w_1,\cdots,w_K\}$, $\bC = \diag\{c_1,\cdots,c_K\}$, and each $w_j$ represents the MSE weighting coefficient of receiver $j$. The above discussion is under the assumption that the network-coding technique (or self-interference cancellation) is not applied at the receivers.

In the case where network coding is allowed, the goal is to make $\hat x_i$ close to $x_i$ after removing self-interference. The decision statistics for all $K$ users can be expressed in vector form as $\bC\br-\bB\bx$, where $\bB$ is a diagonal matrix consisting of the weighting factors of self-interference. The weighted sum square error is then given by $\|\bW^{\frac{1}{2}}\left(\bP\bx-(\bC\br-\bB\bx)\right)\|^2$, where $\|\cdot\|$ represents the Euclidean norm, and $\bW$ is a diagonal matrix with the $i$th diagonal element being the weight of the MSE at user $i$.

We first assume that each user transmit with its maximum power, i.e., $q_i=Q_i$, $i=1,\cdots,K$. The case of user power control, i.e., $q_i\leq Q_i$, $i=1,\cdots,K$, will be discussed later in Section III.E.  Based on the above, the weighted sum MSE minimization problem is formulated as
\begin{subequations} \label{opt_mse}
\begin{align}
\mathop\text{minimize}_{\boldsymbol{G},\boldsymbol{B},\boldsymbol{C}}  &\qquad \mathbb{E}\left\|\bW^{\frac{1}{2}}\left((\bP+\bB) \bx - \bC \br\right)\right\|^2 \label{opt_mse_a}\\
\text{subject to} &\qquad  \text{Tr}\left[\bG\left(\bH\bQ\bH^H+\gm^2\bI\right)\bG^H\right]\le P_r \label{opt_mse_b}\\
& \qquad \bB\text{ and } \bC \text{ are diagonal}.
\end{align}
\end{subequations}
The case of no network coding admits the same formulation except that the matrix $\bB$ is set to ${\bf0}$. Problem \eqref{opt_mse} involves the joint optimization of $\bG$, $\bB$ and $\bC$.
Unfortunately, this problem is non-convex, and thus is in general difficult to solve. We next propose an iterative algorithm to find a suboptimal solution to \eqref{opt_mse}, as described below.

Following the convention in \cite{Joham05,Joung10}, we define
\be \label{al_mmse}
\bar\bG\triangleq\al^{-1}\bG\quad \text{and}\quad \bar\bC\triangleq\al\bC,
\ee
where $\al$ is a scaling factor. Later, we will see that $\al$ is introduced to meet the power constraint at the relay. Then the weighted sum MSE \eqref{opt_mse_a} is expanded as
\bal \label{opt_mse_x1a}
\mathcal{J}(\bar\bG, \al, \bB, \bar\bC)\triangleq &\tr\bigg[\bW\bigg(\bP\bQ\bP^T+\bB\bQ\bB^H -2\Re\{\bar\bC\bF\bar\bG\bH\bQ(\bP+\bB)^H\} \notag\\ &+\bar\bC\bF\bar\bG\left(\bH\bQ\bH^H+\gm^2\bI\right)\bar\bG^H\bF^H\bar\bC^H +\sm^2\al^{-2}\bar\bC \bar\bC^H\bigg)\bigg].
\end{align}
We notice that, for fixed $(\bB,\bar\bC)$, this optimization problem can be solved by the KKT conditions similarly to \cite{Joham05,Joung10}. On the other hand, for fixed $(\bar\bG,\al)$, problem \eqref{opt_mse} also reduces to a quadratic program w.r.t. $(\bB,\bar\bC)$. In fact, both optimization problems admit explicit analytical solution. Thus, we can iteratively optimize $(\bar\bG,\al)$ and $(\bB,\bar\bC)$, yielding an approximate solution to problem \eqref{opt_mse}.

It is worth noting that setting $\bB={\bf 0}$  and $\bar\bC=\bI$ reduces problem \eqref{opt_mse} to the conventional MMSE relaying design problem (without PNC) studied in \cite{aut10,Joung10}. We will show later that the flexibility of choosing $\bB$ and $\bar\bC$  provides significant performance gains.

\subsection{Optimal $(\bar\bG,\al)$ for Fixed $(\bB,\bar \bC)$}
For fixed $(\bB,\bar\bC)$, problem \eqref{opt_mse} reduces to
\begin{subequations} \label{opt_mse_1}
\begin{align}
\mathop\text{minimize}_{\boldsymbol{\bar G},\al}  &\qquad \mathcal{J}(\bar\bG, \al) \label{opt_mse_1a}\\
\text{subject to} &\qquad  \al^2\text{Tr}\left[\bar\bG\left(\bH\bQ\bH^H+\gm^2\bI\right)\bar\bG^H\right]\le P_r. \label{opt_mse_1b}
\end{align}
\end{subequations}
In the above, $\mathcal{J}(\bar\bG, \al)$ is a shorthand of $\mathcal{J}(\bar\bG, \al,\bB,\bar\bC)$ for fixed $(\bB, \bar\bC)$. Similar notation will be used throughout without further notice. We use the Lagrangian method to solve problem \eqref{opt_mse_1}. The Lagrangian function is written as
\bal \label{Jmmse}
\mathcal{L}&(\bar\bG,\al,\lambda)= \notag \\&\tr\left\{\bW\left[\bP\bQ\bP^T+\bB\bQ\bB^H-2\Re\{\bar\bC\bF\bar\bG\bH\bQ(\bP+\bB)^H\}+ \bar\bC\bF\bar\bG\bH\bQ\bH^H\bar\bG^H\bF^H\bar\bC^H\right.\right. \notag\\
&+\left.\left. \gm^2\bar\bC\bF\bar\bG\bar\bG^H\bF^H\bar\bC^H +\sm^2\al^{-2}\bar\bC\bar\bC^H\right]\right\} +\lambda \left(\text{Tr}[\al^2\bar\bG\left(\bH\bQ\bH^H+\gm^2\bI\right)\bar\bG^H]- P_r\right),
\end{align}
where $\lambda$ is the Lagrangian multiplier. By setting $\frac{\partial \mathcal{L}}{\partial \lb}=\frac{\partial \mathcal{L}}{\partial \al}=0$, we obtain
\be
\frac{\partial \mathcal{L}}{\partial \lb}=0 &\Rightarrow&  \al^{2} = {\frac{P_r}{\text{Tr}\left[\bar\bG^{} \left(\bH\bQ\bH^H+\gm^2\bI\right)\bar\bG^H\right]}}, \label{lbx}\\
\frac{\partial \mathcal{L}}{\partial \al}=0 &\Rightarrow& \al^4 = \frac{\sm^2 \tr[\bW\bar\bC\bar\bC^H]}{\lb^{}{\text{Tr}\left[\bar\bG^{} \left(\bH\bQ\bH^H+\gm^2\bI\right)\bar\bG^H\right]}}.\label{alx}
\ee
Hence,
\be \label{68}
\lb\al^2=\frac{\sm^2}{P_r}\tr[\bW\bar\bC\bar\bC^H].
\ee
By setting $\frac{\partial \mathcal{L}}{\partial \bar\bG}=0$ and using \eqref{68}, we obtain the optimal precoder $\bG^{opt}$. Hence the following proposition:

\begin{Proposition}  \label{prop_mmse_1}
For fixed $(\bB,\bar\bC)$, the optimal precoder to problem \eqref{opt_mse_1} is  $\bG^{opt}=\al^{}\bar\bG^{opt}$ with
\begin{subequations}  \label{10}
\bal
\bar\bG^{opt} &= \left(\frac{\sm^2}{P_r} \tr[\bW\bar\bC\bar\bC^H]\bI + \bF^H\bar\bC^H\bW\bar\bC\bF\right)^{-1}\bF^H \bar\bC^H \bW(\bP+\bB) \bQ\bH^H \left(\bH\bQ\bH^H+\gm^2\bI\right)^{-1} \label{tGopt_mmse}\\
\al^{opt} &= P_r^{\frac{1}{2}}\left(\tr\left[\bar\bG^{opt} \left(\bH\bQ\bH^H+\gm^2\bI\right)(\bar\bG^{opt})^H\right]\right)^{-\frac{1}{2}}. \label{al}
\end{align}
\end{subequations}
\end{Proposition}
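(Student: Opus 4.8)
The plan is to treat \eqref{opt_mse_1} as a constrained minimization and solve it through the stationarity (KKT) conditions of the Lagrangian \eqref{Jmmse}, exploiting the fact that for fixed $\al$ the cost is a convex quadratic in $\bar\bG$. I would first dispose of the two scalar conditions. Setting $\partial\Lc/\partial\lb=0$ merely restates the power constraint \eqref{opt_mse_1b} with equality, giving \eqref{lbx}, while $\partial\Lc/\partial\al=0$ gives \eqref{alx}. The key observation is that both \eqref{lbx} and \eqref{alx} contain the common factor $\tr[\bar\bG(\bH\bQ\bH^H+\gm^2\bI)\bar\bG^H]$, so combining them eliminates this factor together with the multiplier, yielding the clean scalar identity \eqref{68}, namely $\lb\al^2=\frac{\sm^2}{P_r}\tr[\bW\bar\bC\bar\bC^H]$. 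This identity is precisely what lets me purge $\lb$ from the final precoder expression.

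Next I would compute the matrix stationarity condition $\partial\Lc/\partial\bar\bG^{*}=0$ using Wirtinger (complex) matrix calculus, treating $\bar\bG$ and $\bar\bG^{*}$ as independent. The term $-2\Re\{\tr[\bW\bar\bC\bF\bar\bG\bH\bQ(\bP+\bB)^H]\}$ contributes the constant $-\bF^H\bar\bC^H\bW(\bP+\bB)\bQ\bH^H$ (only the conjugate half survives differentiation in $\bar\bG^{*}$); the quadratic MSE term contributes $\bF^H\bar\bC^H\bW\bar\bC\bF\,\bar\bG(\bH\bQ\bH^H+\gm^2\bI)$; and the penalty term contributes $\lb\al^2\,\bar\bG(\bH\bQ\bH^H+\gm^2\bI)$. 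Collecting these gives the linear matrix equation
\[
\left(\bF^H\bar\bC^H\bW\bar\bC\bF+\lb\al^2\bI\right)\bar\bG\left(\bH\bQ\bH^H+\gm^2\bI\right)=\bF^H\bar\bC^H\bW(\bP+\bB)\bQ\bH^H.
\]
Since $\bH\bQ\bH^H+\gm^2\bI\succeq\gm^2\bI\succ0$ and, after substituting \eqref{68}, the left factor equals $\frac{\sm^2}{P_r}\tr[\bW\bar\bC\bar\bC^H]\bI+\bF^H\bar\bC^H\bW\bar\bC\bF$, i.e.\ a positive multiple of $\bI$ plus a positive semidefinite matrix, both factors are invertible. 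Left- and right-multiplying by their inverses and using \eqref{68} for $\lb\al^2$ then produces exactly \eqref{tGopt_mmse}. Finally, because the power constraint is active at the optimum, \eqref{lbx} evaluated at $\bar\bG^{opt}$ yields the scaling \eqref{al}, completing the argument.

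I expect the main obstacle to be the complex matrix differentiation of the real-part cross term and of the quadratic form: one must track conjugate transposes carefully (for instance, that the $\bar\bG^{*}$-derivative of $\tr[\bW\bar\bC\bF\bar\bG\bH\bQ(\bP+\bB)^H]$ vanishes while that of its conjugate does not) in order to land on the correct linear equation rather than a transposed or conjugated variant. A secondary point worth making explicit is the positive-definiteness argument above, since it both guarantees that the closed form is well defined and, together with the convexity of the quadratic cost in $\bar\bG$, certifies that the stationary point is the global minimizer for fixed $(\bB,\bar\bC)$.
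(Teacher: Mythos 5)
Your proposal is correct and follows essentially the same route as the paper: form the Lagrangian \eqref{Jmmse}, combine the stationarity conditions in $\lambda$ and $\alpha$ to obtain the identity \eqref{68}, and then solve $\partial\mathcal{L}/\partial\bar{\boldsymbol{G}}=0$ to arrive at \eqref{tGopt_mmse} and \eqref{al}. Your added details on the Wirtinger differentiation and the positive-definiteness of the left factor are consistent with, and merely make explicit, what the paper leaves implicit.
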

As mentioned earlier, $\al$ in \eqref{al} is just a scaling factor to meet the relay power constraint.

\subsection{Optimal $(\bB,\bar\bC)$ for Fixed $(\bar\bG,\al)$}

For fixed $(\bar\bG,\al)$, the optimization problem \eqref{opt_mse} reduces to an unconstrained problem as follows:
\begin{align}
\mathop\text{minimize}_{\boldsymbol{B},\boldsymbol{\bar C}}  &\qquad \mathcal{J}(\bB, \bar\bC).  \label{opt_mse_2}
\end{align}

\begin{Proposition} \label{prop_mmse_2}
For fixed $(\bar\bG,\al)$, the optimal solution to \eqref{opt_mse_2} is given by
\begin{subequations} \label{bc}
\bal
\bB^{opt}=&\bD_1(\bD_2-\bD_1^H\bD_1)^{-1}[\bP\bQ\bH^H\bar\bG^H\bF^H]_{{\rm diag}}, \label{b}\\
\bar\bC^{opt}=&(\bD_2-\bD_1^H\bD_1)^{-1}[\bP\bQ\bH^H\bar\bG^H\bF^H]_{{\rm diag}}, \label{c}
\end{align}
\end{subequations}
where
\begin{subequations} \label{D1D2}
\begin{align}
\bD_1&= [\bF\bar\bG\bH]_{{\rm diag}}\in\mathbb{C}^{K\times K}, \label{D1}\\
\bD_2 &= [\bF\bar\bG\left(\bH\bQ\bH^H+\gm^2\bI\right)\bar\bG^H\bF^H+\sigma^2\al^{-2}\bI]_{{\rm diag}} \in\mathbb{C}^{K\times K}. \label{D2}
\end{align}
\end{subequations}
\end{Proposition}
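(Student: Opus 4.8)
The plan is to treat \eqref{opt_mse_2} as an unconstrained quadratic minimization over the diagonal entries of $\bB$ and $\bar\bC$ and to solve it in closed form by exploiting the diagonal structure. Writing $\bB=\diag\{b_1,\dots,b_K\}$ and $\bar\bC=\diag\{\bar c_1,\dots,\bar c_K\}$, I would start from the expanded objective \eqref{opt_mse_x1a} and observe that, because $\bW$, $\bQ$, $\bB$ and $\bar\bC$ are all diagonal, every trace term collapses to a sum over the diagonal of the remaining (generally non-diagonal) factor. Concretely, $\tr[\bW\bB\bQ\bB^H]=\sum_k w_k q_k|b_k|^2$, the cross term contributes $-2\sum_k w_k\,\Re\{\bar c_k\,b_k^*\,q_k(\bD_1)_{kk}\}$, and the $\bar\bC$-only terms contribute $\sum_k w_k|\bar c_k|^2(\bD_2)_{kk}$ together with the linear piece $-2\sum_k w_k\Re\{\bar c_k\,\overline{[\bP\bQ\bH^H\bar\bG^H\bF^H]_{kk}}\}$ coming from the $\bP$ part of \eqref{opt_mse_x1a}. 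The payoff of this step is that $\mathcal{J}$ separates as a constant plus $\sum_k w_k f_k(b_k,\bar c_k)$, i.e. into $K$ independent scalar problems, each coupling only $(b_k,\bar c_k)$ through the $k$th diagonal entries of $\bD_1$, $\bD_2$ and $[\bP\bQ\bH^H\bar\bG^H\bF^H]_{\rm diag}$ defined in \eqref{D1D2}.

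Next I would record that each $f_k$ is a \emph{strictly convex} quadratic form in the complex pair $(b_k,\bar c_k)$: its Hermitian coefficient matrix has diagonal $(q_k,(\bD_2)_{kk})$ with off-diagonal built from $q_k(\bD_1)_{kk}$, and its Schur complement with respect to the $b_k$-entry is positive because $(\bD_2)_{kk}$ is a diagonal entry of the positive-semidefinite covariance $\bF\bar\bG(\bH\bQ\bH^H+\gm^2\bI)\bar\bG^H\bF^H$ plus the strictly positive noise term $\sm^2\al^{-2}$, which dominates the squared loop gain. This simultaneously guarantees that the first-order conditions are necessary and sufficient for the global minimizer and that $\bD_2-\bD_1^H\bD_1$ in \eqref{bc} is invertible, so the closed form is well defined.

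I would then impose stationarity through complex (Wirtinger) differentiation, treating $\bB,\bB^*$ and $\bar\bC,\bar\bC^*$ as independent and setting $\partial\mathcal{J}/\partial\bB^*=0$ and $\partial\mathcal{J}/\partial\bar\bC^*=0$; equivalently, one computes the free matrix gradients and retains their diagonals via the operator $[\cdot]_{\rm diag}$. The $\bB$-equation is the simplest and yields the proportionality $\bB=\bD_1\bar\bC$, i.e. the optimal self-interference weight equals the diagonal loop gain $[\bF\bar\bG\bH]_{\rm diag}$ applied to $\bar\bC$; this is exactly the relation between \eqref{b} and \eqref{c}. Substituting $\bB=\bD_1\bar\bC$ into the $\bar\bC$-equation eliminates $\bB$ and leaves $(\bD_2-\bD_1^H\bD_1)\bar\bC=[\bP\bQ\bH^H\bar\bG^H\bF^H]_{\rm diag}$, which is precisely the Schur-complement reduction of the preceding paragraph. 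Solving for $\bar\bC$ gives \eqref{c}, and back-substitution into $\bB=\bD_1\bar\bC$ gives \eqref{b}.

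The main obstacle is the careful bookkeeping of the complex matrix differentiation under the diagonal constraint: one must differentiate the $\Re\{\cdot\}$ terms correctly and justify that projecting the unconstrained gradient onto its diagonal indeed produces the valid stationarity conditions for the diagonally-constrained variables. The decoupling step in the first paragraph is what makes this rigorous and painless, since it replaces the matrix differentiation by $K$ transparent two-variable scalar minimizations; once that reduction is in hand, verifying the proportionality $\bB=\bD_1\bar\bC$, eliminating $\bB$, and invoking the Schur-complement positivity to invert $\bD_2-\bD_1^H\bD_1$ are all routine.
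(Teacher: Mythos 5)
Your proposal is correct and follows essentially the same route as the paper: both exploit the diagonal structure to reduce \eqref{opt_mse_2} to a convex quadratic in the diagonal entries of $(\bB,\bar\bC)$, obtain the relation $\bB=\bD_1\bar\bC$, and justify invertibility via $[\bD_2]_{ii}\ge q_i[\bD_1^H\bD_1]_{ii}$ together with the strictly positive $\sm^2\al^{-2}$ term; the paper completes squares in Frobenius norm where you set Wirtinger derivatives to zero, which for a convex quadratic is the same argument. One algebraic point worth fixing: carrying out your own stationarity computation actually yields $(\bD_2-\bQ\bD_1^H\bD_1)\bar\bC=[\bP\bQ\bH^H\bar\bG^H\bF^H]_{\rm diag}$ --- with the factor $\bQ$, exactly as in the paper's proof where $\bD_3=(\bD_2-\bQ\bD_1^H\bD_1)^{\frac{1}{2}}$ --- rather than the $\bQ$-free expression you wrote (which merely copies a typo in the proposition display and is only equivalent when $\bQ=\bI$).
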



\begin{proof}
The weighted sum MSE in \eqref{opt_mse_x1a} can be rewritten as
\be
&&\mathcal{J}(\bB,\bar\bC)   \notag \\
&\mathop =\limits^{}& \tr\left[\bW\bP\bQ\bP^T +\bW\bB \bQ\bB^H -2\Re\{\bW \bar\bC\bS+ \bW\bar\bC\bD_1\bQ\bB^H\}+ \bW\bar\bC \bD_2 \bar\bC^H\right] \label{48a}\\
 &\mathop =\limits^{}& \tr[\bW\bP\bQ\bP^T]+ \|\bW^{\frac{1}{2}}\bQ^{\frac{1}{2}}(\bB-\bD_1\bar\bC)\|_F^2 -2\Re\{\tr[ \bW \bar\bC\bS]\} \notag \\
&&+\tr[\bW\bar\bC^H (\bD_2-\bQ\bD_1^H\bD_1) \bar\bC] \label{48b}\\
 &\mathop =\limits^{}& \tr[\bW\bP\bQ\bP^T]+ \|\bW^{\frac{1}{2}}\bQ^{\frac{1}{2}}(\bB-\bD_1\bar\bC)\|_F^2 +\|\bW^{\frac{1}{2}}(\bD_3\bar\bC-\bD_3^{-1}\bS^H)\|_F^2   \notag \\
&& - \tr[\bW\bS^H\bD_3^{-2}\bS],  \label{71}
\ee
where \eqref{48a} follows by noting \eqref{D1D2} and letting
\be
\bS = [\bF\bar\bG\bH \bQ\bP^T]_{\rm diag},
\ee
and \eqref{48b} is obtained by factorizing w.r.t. $\bB$, and \eqref{71} follows by letting $\bD_3=(\bD_2-\bQ\bD_1^H\bD_1)^{\frac{1}{2}}$ and factorizing w.r.t. $\bar\bC$. In the above, $\bD_3=(\bD_2-\bQ\bD_1^H\bD_1)^{\frac{1}{2}}$ is well defined by noting that
\be
[\bD_2]_{ii}&\ge&[\bF\bar\bG\bH\bQ\bH^H\bar\bG^H\bF^H]_{ii} \label{22}\\
&\ge& q_i [\bF\bar\bG\bH]_{ii}^H [\bF\bar\bG\bH]_{\imath\imath} \label{23}\\
&=& q_i[\bD_1^H\bD_1]_{ii},\quad i=1,\cdots,K, \label{24}
\ee
where \eqref{22} follows by the definition in \eqref{D2}, and \eqref{24} follows \eqref{D1}. From \eqref{71}, we see that the weighted sum MSE is minimized when
\be
\bar\bC = \bD_3^{-2}\bS^H\quad {\rm and}\quad \bB=\bD_1\bar\bC,
\ee
which concludes the proof.
\end{proof}

From \eqref{bc}, we notice that
\be
\bB^{opt}=\bD_1\bar\bC^{opt}=[\bar\bC^{opt}\bF\bar\bG\bH]_{\text{diag}}. \label{16}
\ee
That is, $\bB^{opt}$ is exactly the self-interference weights in the filtered signal  $\bC\br$. This implies that,  with $\bB^{opt}$ given in \eqref{b},  the self-interference is perfectly canceled at each user end.

\subsection{Overall Iterative Algorithm}

The overall iterative algorithm is outlined by the following pseudo-code.

\medskip\noindent{\bf{Algorithm 1.}}

\begin{algorithmic}[1]
\STATE {\bf Initial}: $\bB=\bB_{0}$, $\bar\bC=\bar\bC_{0}$;
\WHILE{\text{the weighted sum MSE can be reduced by more than $\epsilon$}}
    \STATE Compute $\bar\bG$ and $\al$ using \eqref{10};
    \STATE Compute $\bB$ and $\bar\bC$ using \eqref{bc};
\ENDWHILE
\end{algorithmic}

The convergence of Algorithm $1$ is guaranteed because: 1) the weighted sum MSE decreases strictly monotonically in 
each step; and 2) the weighted sum MSE is non-negative. From Proposition $2.7.1$ in \cite{dimi99}, it can be further shown that this convergence point is a local optimum of the original problem \eqref{opt_mse}. The local optimum at which Algorithm $1$ stops in general depends on the initial condition $(\bB_0,\bar\bC_0)$. We shall discuss the choice of the initial condition and the speed of convergence in Section V.

\subsection{Further Discussion}

So far, our discussions have been constrained to the case that every user employs a same predetermined constant power in transmission. This constant power constraint can be relaxed to a maximum power constraint by reformulating Problem \eqref{opt_mse} as
\begin{subequations}  \label{opt_mse_new}
\begin{align}
\mathop\text{minimize}_{\boldsymbol{G},\boldsymbol{B},\boldsymbol{C},\bQ}  &\qquad \mathbb{E}\left\|\bW^{\frac{1}{2}}\left((\bP+\bB) \bx - \bC \br\right)\right\|^2  \\
\text{subject to} &\qquad  \text{Tr}\left[\bG\left(\bH\bQ\bH^H+\gm^2\bI\right)\bG^H\right]\le P_r \label{21b}\\
& \qquad q_i \leq Q_i,\quad  i=1,\cdots,K, \label{21c}\\
& \qquad \bB, \bC\text{ and } \bQ \text{ are diagonal}.
\end{align}
\end{subequations}
The above problem formulation allows power control among users, which leads to better performance. 

Following the spirit of our proposed iterative approach, we can approximately solve Problem \eqref{opt_mse_new} similarly to Algorithm $1$ by adding one more step in iteration. Specifically, besides the two steps of iteration in Algorithm $1$, we introduce one more step, namely, to optimize $\{q_i\}$ for fixed $(\bar\bG,\bB,\bar\bC)$. Note that, for any fixed $(\bar\bG,\bB,\bar\bC)$, both the cost function and the constraints \eqref{21b} and \eqref{21c} are linear functions of $\{q_i\}$, implying that Problem \eqref{opt_mse_new} reduces to a linear program of $\{q_i\}$, which is solvable by standard optimization tools. Furthermore, by noting that the feasible region of a linear program is a polytope and that the optimal solution of the linear program must occur at a vertex of this polytope, we can show that the optimal $\{q_i\}$ for Problem \eqref{opt_mse_new} is either $0$ or $Q_i$, i.e., each user either employs maximum power in transmission or keeps silent. We will return to this issue of user power control later in Sections IV and V.

\section{Weighted Sum Rate Maximization}

In this section, we study weighted sum rate maximization, which is another widely used in optimizing the performance of wireless networks. We show that the weighted sum rate maximization problem can be reformulated as an iterated weighted sum MSE minimization problem, and therefore, the techniques in Section III can be directly applied here.

\subsection{Problem Formulation}

For reasons to be clarified later, we now use a diagonal matrix $\bDt$, instead of $\bB$, to represent the weights of self-interference to be canceled. Note that letting $\bDt={\bf0}$ yields the conventional non-PNC problem. Then, after self-interference cancellation, the signal becomes $\br-\bDt\bx$.
To simplify the index mapping of the received signal vector, let
\be  \label{zr}
\bz=\bP^T (\br-\bDt\bx).
\ee
Using \eqref{r}, the $i$th element of $\bz$ is given by
\bal \label{z_i}
z_i=\bp_i^T\bF\bG\bh_i x_i+ \sum_{\ell\neq i}\bp_i^T\bF\bG\bh_\ell x_\ell  +\bp_i^T\bF\bG\bu+\bp_i^T\bv - \bp^T_i\bDt\bx,
\end{align}
where $\bh_i$ is the $i$th column of $\bH$. In this way, $z_i$ is the received signal of user $\pi(i)$ for the recovery of the message from user $i$.
Thus, the achievable rate of user $i$ is given by
\be \label{rate_x}
 R_i =\frac{1}{2}\log \left(1+\mathsf{SINR}_i\right),
\ee
where
\be  \label{sinr}
\mathsf{SINR}_i = \frac{q_i|\bp_i^T\bF\bG\bh_i|^2} {|\bp_i^T(\bF\bG\bH-\bDt)\bQ|^2-q_i|\bp_i^T\bF\bG\bh_i|^2+\gamma^2|\bp_i^T\bF\bG|^2 +\sigma^2}.
\ee
The factor $1/2$ is due to the two-phase transmission.
Our purpose is to maximize the weighted sum rate under the power constraint of the relay. This optimization problem is formulated as
\begin{subequations}  \label{max_rate}
\begin{align}
\mathop\text{maximize}_{\boldsymbol{G}, \bDt}  &\qquad \sum_{i=1}^K t_i R_i \label{sum_rate}\\
\text{subject to} &\qquad  \tr\left[\bG\left(\bH\bQ\bH^H+\gm^2\bI\right)\bG^H\right]\leq P_r, \label{opt_rate_b}
\end{align}
\end{subequations}
where $\bT=\diag\{t_1,\cdots,t_K\}$. The problem \eqref{max_rate} is non-convex w.r.t. $(\bG,\bDt)$ and thus is difficult to solve directly.

\subsection{Conversion to Weighted Sum-MSE Minimization}

Recall that we assume Gaussian inputs: $x_i\sim \mathcal{CN}(0,q_i)$, $i=1,\cdots,K$, i.e.,
\be \label{px}
p(x_i)=\frac{1}{\pi q_i}e^{-\frac{|x_i|^2}{q_i}},\quad i=1,\cdots,K.
\ee
From the equivalent transmission \eqref{z_i}, the conditional distribution can be readily obtained as
\be \label{prx}
p(z_i|x_i) = \frac{1}{\pi\Sm_i'}e^{-\frac{|z_i-\bp_i^T\bF\bG\bh_ix_i|^2}{\Sm_i'}},\quad i=1,\cdots,K,
\ee
where
\be
\Sm_i'&=& \bp_i^T\bF\bG(\bH\bQ\bH^H-q_i\bh_i\bh_i^H)\bG^H\bF^H\bp_i + \bp_i^T\bDt\bQ\bDt^H\bp_i \notag\\
 &&+ \gm^2\bp_i^T\bF\bG\bG^H\bF^H\bp_i + \sm^2\bp_i^T\bp_i.
\ee
With Bayes' rule, we can obtain the {\em a posteriori} distribution $p(x_i|z_i)$, which is also Gaussian and is given by
\be \label{pxr}
p(x_i|z_i)=\frac{1}{\pi\Sm_i}e^{-\frac{|x_i-\om_iz_i|^2}{\Sm_i}},\quad i=1,\cdots,K.
\ee
where $\om_i$ is a scaling coefficient to be determined,  $\om_iz_i$ represents the conditional mean, and $\Sm_i$ represents the conditional variance. From \cite{kay01}, the {\em a posteriori} mean and variance are respectively given by
\begin{subequations}
\bal
\mathbb{E}[x_i|z_i]&=\mathcal{C}_{x_iz_i}\mathcal{C}_{z_iz_i}^{-1}z_i,\label{mu}\\
\mathcal{C}_{x_ix_i|z_i}&=\mathcal{C}_{x_ix_i}-\mathcal{C}_{x_iz_i} \mathcal{C}_{z_iz_i}^{-1}\mathcal{C}_{z_ix_i}, \label{Sigma}
\end{align}
\end{subequations}
where the involved covariances are given by
\begin{subequations}
\be
\mathcal{C}_{x_ix_i}&=&q_i,\\
\mathcal{C}_{x_iz_i}&=& q_i\bh_i^H\bG^H\bF^H\bp_i, \label{crx}\\
\mathcal{C}_{z_iz_i}&=&\bp_i^T\bF\bG\left(\bH\bQ\bH^H+\gm^2\bI\right) \bG^H\bF^H\bp_i-\bp_i^T\bDt\bQ\bDt^H\bp_i + \sigma^2.  \label{crr}
\ee
\end{subequations}
Thus, we have
      \bal 
       \om_{\pi(i)} = \mathcal{C}_{x_iz_i} \mathcal{C}_{z_iz_i}^{-1}\quad \text{and}\quad      \Sigma_{\pi(i)}= \mathcal{C}_{x_ix_i|z_i},\ i=1,\cdots,K \label{Smi}
      \end{align}

From the Blahut-Arimoto algorithm in Lemma 13.8.1 of \cite{cover_bk}, the rate in \eqref{rate_x} can be written as
\be 
R_i&=&\frac{1}{2}\mathbb{E}_{x_i,z_i}\left[ \log\frac{p(x_i|z_i)}{p(x_i)}\right]\\
&=& \max_{\phi(\cdot|\cdot)} \frac12 \mathbb{E}_{x_i,z_i}\left[ \log\frac{\phi(x_i|z_i)}{p(x_i)}\right], \quad i=1,\cdots,K, \label{rate_cover}
\ee
where the expectation is taken over the joint distribution of $x_i$ and $z_i$, given by $p(x_i, z_i) = p(x_i)p(z_i|x_i)$ given in \eqref{px} and \eqref{prx}, and $\phi(\cdot|\cdot)$ is an arbitrary distribution of $x_i$ conditioned on $z_i$. It is known that the optimal choice of $\phi(x_i|z_i)$ follows the Gaussian distribution in the form of \eqref{pxr}. Then, with \eqref{pxr} and \eqref{px}, \eqref{rate_cover} is converted to be
\be  \label{xxx}
{R}_i &=& \max_{w_i, \Sm_i} \frac{1}{2} \mathbb{E}_{x_i,z_i}\bigg( \log\frac{q_i}{\Sm_i} -\frac{|x_i-\om_iz_i|^2}{\Sm_i} + \frac{|x_i|^2}{q_i}\bigg),
\ee
with the optimal $w_i$ and $\Sm_i$ given by \eqref{Smi}. Note that similar conversions have been previously used in \cite{cioffi08,xianda10} for optimizing beamforming vectors in broadcast channels. Based on this conversion, we next establish a relation between the weighted sum rate maximization problem \eqref{max_rate} and the weighted sum MSE minimization problem \eqref{opt_mse}.

Plugging \eqref{zr} into \eqref{xxx}, the sum rate is rewritten as
\bal  \label{rate_new}
&\sum_{i=1}^K t_i {R}_i  \notag \\
 =&  \max_{\{w_i\}, \{\Sm_i\}}  -\frac{1}{2}\left\{\eptn_{\bx,\bz}\left\|\bT^{\frac12}\hat\bSm^{-\frac{1}{2}}[(\bI+\hat\bOm\bP^T\bDt)\bx- \hat\bOm\bP^T\br]\right\|^2 + \sum_{i=1}^K t_i \log\frac{\Sm_i}{q_i}-\sum_{i=1}^K t_i \right\}\\
=& \max_{\{w_i\}, \{\Sm_i\}} -\frac{1}{2}\left\{\eptn_{\bx,\bz}\left\|\bT^{\frac{1}{2}}\bSm^{-\frac{1}{2}}[(\bP+\bOm\bDt)\bx-\bOm\br]\right\|^2 + \sum_{i=1}^K t_i \log\frac{\Sm_i}{q_i}-\sum_{i=1}^K t_i\right\}, \label{sum_rate_27}
\end{align}
where
\bseq
\be
\hat\bSm &=& \diag\{\Sm_1,\cdots,\Sm_K\},\\
\bSm &=& \diag\{\Sm_{\pi^{-1}(1)},\cdots,\Sm_{\pi^{-1}(K)}\}=\bP\hat\bSm\bP^T,\\
\hat\bOm &=& \diag\{\om_1,\cdots,\om_K\},\\
\bOm &=& \diag\{\om_{\pi^{-1}(1)},\cdots,\om_{\pi^{-1}(K)}\}=\bP\hat\bOm\bP^T. \label{28}
\ee
\eseq
The expectation taken over the joint distribution of $\bx$ and $\bz$ is equivalent to that taken over the joint distribution of  $\bx$, $\bu$, and $\bv$, where the noise vectors $\bu$, and $\bv$ were defined in \eqref{r}. In \eqref{sum_rate_27}, $\eptn_{\bx,\bz}\|\bSm^{-\frac{1}{2}}\left((\bP+\bOm\bDt)\bx-\bOm\br\right)\|^2$ is the same as the weighted sum MSE in \eqref{opt_mse_a} by letting\footnote{To avoid confusion in establishing this relation, we use $\bDt$, instead of $\bB$, to denote the weights of self-interference in the case of weighted sum rate maximization.}
\be  \label{wbc}
\bW^{}=\bT\bSm^{-1},\quad \bB=\bOm\bDt,\quad {\rm and}\quad
\bC=\bOm.
\ee
For fixed $\bOm$ and $\bSm$, the optimization of $\bG$ is exactly the same as that for the weighted sum MSE minimization problem in Section III.B. For fixed $\bG$, the optimal $\bOm$ and $\bSm$ can be determined by the MMSE estimator of the transmission in \eqref{z_i}, which will be presented explicitly in the following subsection.

\subsection{Iterative Algorithm}

Similarly to \eqref{al_mmse}, we define
\be \label{beta}
\bar\bG\triangleq \al^{-1}\bG,\quad \bar\bDt = \al^{-1}\bDt, \quad {\rm and}\quad \bar \bOm\triangleq \al \bOm,
\ee
where $\al$ still represents the scaling factor to meet the relay power constraint. With \eqref{beta} and the received signal vector at the users $\br$ in \eqref{r}, the weighted sum rate in \eqref{sum_rate_27} is expanded as
\bal 
\mathcal{R}&(\bar\bG, \al, \bar\bDt, \bar\bOm, \bSm)  \notag \\
= &-\frac{1}{2} \tr\left[\bT\bSm^{-1}\left(\bP\bQ\bP^T+\bar\bOm\bar\bDt\bQ\bar\bDt^H \bar\bOm^H -2\Re\{\bar\bOm\bF\bar\bG\bH\bQ(\bP+\bar\bOm\bar\bDt)^H\}\right.\right. \notag \\ &\left.\left.+\bar\bOm\bF\bar\bG\left(\bH\bQ\bH^H+\gm^2\bI\right)\bar\bG^H\bF^H\bar\bOm^H +\sm^2\al^{-2}\bar\bOm \bar\bOm^H\right)\right] - \frac{1}{2}\sum_{i=1}^K t_i\log\frac{\Sm_i}{q_i} +\frac12 \sum_{i=1}^K t_i. \label{rate_expand}
\end{align}
Then, based on \eqref{sum_rate_27}, we reformulate the optimization problem \eqref{max_rate} as
\begin{subequations}  \label{max_rate_x}
\begin{align}
\mathop \text{maximize}_{\mathclap{\boldsymbol{\bar G}, \al, \boldsymbol{{\bar\Delta}}, \boldsymbol{{\bar\Omega}}, \boldsymbol{\Sigma}}}  &\qquad \mathcal{R}(\bar\bG, \al, \bar\bDt, \bar\bOm, \bSm)\\
\text{subject to} &\qquad  \al^2\tr\left[\bar\bG\left(\bH\bQ\bH^H+\gm^2\bI\right)\bar\bG^H\right]\leq P_r \label{relay_power}.
\end{align}
\end{subequations}
Noting the similarity to problem \eqref{opt_mse}, we develop an iterative algorithm to solve problem \eqref{max_rate_x} as follows.

\subsubsection{Optimal $(\bar\bG,\al)$ for fixed $(\bar\bDt, \bar\bOm, \bSm)$} \label{D}

For fixed $(\bar\bDt, \bar\bOm, \bSm)$, problem \eqref{max_rate_x} is the same as the one in \eqref{opt_mse_1} by letting $\bW=\bT\bSm^{-1}$, $\bB=\bar\bOm\bar\bDt$ and $\bar\bC=\bar\bOm$ except for some additive constants. Thus, from Proposition \ref{prop_mmse_1}, the optimal precoder can be immediately written as $\bG^{opt} = \al\bar\bG^{opt}$ with
\bseq  \label{Gnopt_rate}
\be
\bar\bG^{opt}& =& \left(\frac{\sm^2}{P_r}\tr[\bT\bSm^{-1}\bar\bOm \bar\bOm^H]\bI + \bF^H\bar\bOm^H \bT\bSm^{-1}\bar\bOm\bF \right)^{-1} \bF^H\bar\bOm^H\bT \bSm^{-1} \notag\\
&&\times(\bP+\bar\bOm\bar\bDt)\bQ\bH^H \left(\bH\bQ\bH^H+\gm^2\bI\right)^{-1} \label{g2}\\
\al^{opt} &=& P_r^{\frac{1}{2}}\left(\text{Tr}\left[\bar\bG^{opt} \left(\bH\bQ\bH^H+\gm^2\bI\right)(\bar\bG^{opt})^H\right]\right)^{-\frac{1}{2}}. \label{al2}
\ee
\eseq

\subsubsection{Optimal $(\bar\bDt, \bar\bOm, \bSm)$ for fixed $(\bar\bG,\al)$} \label{D}

For fixed $(\bar\bG,\al)$, we aim to find the optimal $3$-tuple $(\bar\bDt, \bar\bOm, \bSm)$ that maximizes $\mathcal{R}(\bar\bDt, \bar\bOm, \bSm)$.
We first determine the optimal $\bar\bDt$. From \eqref{wbc} and \eqref{beta}, we see that
$\eptn_{x_i,z_i}\left\|\bT^{\frac{1}{2}}\bSm^{-\frac{1}{2}}[(\bP+\bOm\bDt)\bx-\bOm\br]\right\|^2$ in \eqref{sum_rate_27} is equivalent to \eqref{opt_mse_a} by replacing $\bW$ with $\bT\bSm^{-1}$, $\bB$ with $\bOm\bDt$ and $\bC$ with $\bOm$. Together with the fact that the optimal $\bB$ for \eqref{opt_mse_2} is given in \eqref{16}, i.e., $\bB^{opt}=[\bar\bC^{opt}\bF\bar\bG\bH]_{\text{diag}}$, the optimal $\bar\bDt$ is given by
\be \label{bopt}
\bar\bDt^{opt} = [\bF\bar\bG\bH]_{\rm diag}.
\ee
With \eqref{bopt}, we obtain $\bDt^{opt}=\al\bar\bDt^{opt}=[\bF\bG\bH]_{\rm diag}$ which consists of the self-interference weights in the received signal $\br$. This means that the self-interference is perfectly canceled at the receiver ends. (Since self-interference is known precisely, it is rather obvious it should be completely canceled before detection.)
For fixed $(\bar\bG,\al,\bar\bDt)$, the optimal $(\bar\bOm, \bSm)$ is determined by \eqref{Smi}, i.e., the MMSE estimator of the transmission in \eqref{z_i}. 


\subsubsection{Overall iterative algorithm}

The weighted sum rate optimization problem \eqref{max_rate} can be solved by iteratively solving the above two subproblems. The procedure is outlined in the following algorithm.

\medskip\noindent{\bf{Algorithm 2.}}

\begin{algorithmic}[1]
\STATE {\bf Init}: $\bar\bDt=\bDt_0$, $\bar\bOm=\bOm_0$, $\bSm=\bSm_0$;
\WHILE{\text{the weighted sum rate can be improved by more than $\delta$}}
    \STATE Compute $\bar\bG$ and $\al$ using \eqref{Gnopt_rate};
    \STATE Compute $\bar \bDt$, $\bar\bOm$ and $\bSm$ using \eqref{bopt} and \eqref{Smi};
\ENDWHILE
\end{algorithmic}

The convergence of Algorithm $2$ is guaranteed, as the weighted sum rate is bounded and strictly monotonically increases in the iterative process. The convergence point depends on the initial point $(\bar\bDt_0,\bar\bOm_0,\bSm_0)$. We will discuss the choice of the initial point of Algorithm $2$ in Section V.

\subsection{Further Discussion}

It is interesting to compare Algorithms $1$ and $2$. We see that Algorithm $2$ can be treated as a weighted sum MSE minimization algorithm with varying weights, since the weight matrix $\bW$ remains constant in Algorithm $1$, but the corresponding ``weight matrix'' $\bT\bSm^{-1}$ varies in Algorithm $2$. The similarities between the two algorithms eventually lead to similar asymptotic behaviors in the extreme SNR regimes, as will be shown in the next section.

Moreover, similarly to the case of weighted sum MSE minimization, we may also consider the user power control for the weighted sum rate maximization problem. Specifically, Problem \eqref{max_rate_x} can be reformulated as 
\begin{subequations}  \label{max_rate_x_new}
\begin{align}
\mathop \text{maximize}_{\mathclap{\boldsymbol{\bar G}, \al, \boldsymbol{{\bar\Delta}}, \boldsymbol{{\bar\Omega}}, \boldsymbol{\Sigma},\bQ}}  &\qquad \mathcal{R}(\bar\bG, \al, \bar\bDt, \bar\bOm, \bSm)\\
\text{subject to} &\qquad  \al^2\tr\left[\bar\bG\left(\bH\bQ\bH^H+\gm^2\bI\right)\bar\bG^H\right]\leq P_r, \label{relay_power}\\
& \qquad q_i \leq Q_i,\quad  i=1,\cdots,K,\\
&\qquad \bQ\text{ is diagonal}. 
\end{align}
\end{subequations}
Again, the new problem can be approximately solved in an iterative fashion by adding one more iteration step to Algorithm $2$. This extra step is to optimize $\{q_i\}$ by fixing the other variables, which is solvable by standard linear programming. We omit the details here.

\section{Asymptotic Analysis}

Algorithms $1$ and $2$ only guarantee local optima of the weighted sum MSE minimization and weighted sum rate maximization problems. In this section, we carry out asymptotic analysis and show that, with proper initialization, the proposed iterative algorithms are asymptotically optimal in the low and high SNR regimes.
For ease of discussion, we assign equal weights to the weighted sum MSE minimization problem, i.e., $\bW=\bT$. We will see that the asymptotic solutions to the weighted sum MSE minimization and weighted sum rate maximization allows unified expressions. 

\subsection{Low-SNR Analysis}

We start with the low-SNR case. We focus on the limit where the noise levels $\sm^2$ and $\gm^2$ tend to infinity, i.e., $\sm^2,\gm^2\rightarrow +\infty$.
The main result is summarized as follows; the proof is given in Appendix \ref{theorem_lowp}.

\begin{framed}
\begin{Theorem}  \label{theorem_low}
In the limit of $\sm^2, \gm^2\rightarrow +\infty$, the asymptotically optimal precoders for the weighted sum MSE minimization in \eqref{opt_mse} and the weighted sum rate maximization in \eqref{max_rate}, with and without PNC, are identical and can be expressed as
\be \label{44}
\bG^0 = \al\bar\bG^0,
\ee
where $\bar\bG^0$ is such that $vec(\bar\bG^0)$ is an eigenvector  corresponding to the maximum eigenvalue of
\be  \label{psi}
\bPsi &=& \sum_{\ell=1}^K q_\ell^2 w_\ell  (\bh_\ell\bh_\ell^H)^T\otimes(\bF^H\bp_\ell\bp_\ell^T\bF),
\ee
and scalar $\al$ is such that the precoder $\bG^0$ satisfies the power constraint with equality at the relay.
\end{Theorem}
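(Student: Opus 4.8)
The plan is to eliminate $\al$ through the relay power constraint and show that, in the extreme-noise limit, the objectives of all four problems (MSE/rate, PNC/non-PNC) collapse to a single Rayleigh quotient in $\bg\triangleq vec(\bar\bG)$, whose maximizer is the dominant eigenvector of $\bPsi$. I will work from the minimized-MSE expression already derived for Proposition \ref{prop_mmse_2}, so that most of the algebra is reused.

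First I would carry out an order-of-magnitude analysis of the scaling factor. Since $\al^2 = P_r/\tr[\bar\bG(\bH\bQ\bH^H+\gm^2\bI)\bar\bG^H]$, as $\gm^2\to\infty$ the $\gm^2\bI$ term dominates and $\al^{-2}\sim\gm^2\tr[\bar\bG\bar\bG^H]/P_r$. Consequently the noise entry $\sm^2\al^{-2}$ of $\bD_2$ in \eqref{D2} scales like $\sm^2\gm^2$, which dominates both the finite signal part $[\bF\bar\bG\bH\bQ\bH^H\bar\bG^H\bF^H]_{\rm diag}$ and the $O(\gm^2)$ relay-noise part, so $\bD_2\to\sm^2\al^{-2}\bI$ to leading order. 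Since $\bQ\bD_1^H\bD_1=O(1)$, one also has $\bD_3^2=\bD_2-\bQ\bD_1^H\bD_1\sim\bD_2$; this is exactly the step that makes the PNC ($\bB$ free) and non-PNC ($\bB={\bf 0}$) designs coincide, because their minimized MSEs differ only through the factor $\bD_3^{-2}$ versus $\bD_2^{-1}$, and these agree asymptotically when $\bQ\bD_1^H\bD_1$ is negligible against $\bD_2$.

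Next I would use the identity that minimizing the weighted sum MSE is equivalent to maximizing $\tr[\bW\bS^H\bD_3^{-2}\bS]$ with $\bS=[\bF\bar\bG\bH\bQ\bP^T]_{\rm diag}$, as in the proof of Proposition \ref{prop_mmse_2}. With $\bD_3^{-2}\to(\al^2/\sm^2)\bI$, and using $\bP^T\bp_\ell=\bee_\ell$ so that the relevant diagonal entry of $\bS$ is $q_\ell\,\bp_\ell^T\bF\bar\bG\bh_\ell$, the objective becomes $\frac{\al^2}{\sm^2}\sum_\ell w_\ell q_\ell^2|\bp_\ell^T\bF\bar\bG\bh_\ell|^2$. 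Substituting $\al^2\sim P_r/(\gm^2\tr[\bar\bG\bar\bG^H])$ converts the prefactor into the fixed constant $P_r/(\sm^2\gm^2)$ times a Rayleigh quotient. The vectorization rule $\bp_\ell^T\bF\bar\bG\bh_\ell=(\bh_\ell^T\otimes\bp_\ell^T\bF)\bg$ together with $(\bA\otimes\bB)(\bC\otimes\bD)=(\bA\bC)\otimes(\bB\bD)$ reproduces precisely the summands of \eqref{psi}, so the objective is $\bg^H\bPsi\bg/\bg^H\bg$. The maximizer of a Rayleigh quotient is the eigenvector of the largest eigenvalue of the Hermitian PSD matrix $\bPsi$, giving $\bar\bG^0$; because the quotient is scale-invariant only the direction is fixed, and $\al$ is then chosen to meet the power constraint with equality, yielding \eqref{44}. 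For the rate problem I would invoke the equivalence \eqref{wbc}: the $\bar\bG$-subproblem is the MSE one with effective weight $\bW=\bT\bSm^{-1}$, and from \eqref{Smi} the a posteriori variance $\Sm_i=q_i-|\mathcal{C}_{x_iz_i}|^2/\mathcal{C}_{z_iz_i}\to q_i$ since $\mathcal{C}_{z_iz_i}\to\infty$ while $|\mathcal{C}_{x_iz_i}|^2$ stays finite; hence $\bT\bSm^{-1}\to\bT\bQ^{-1}$ and the same eigenvector structure emerges.

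The main obstacle I anticipate is twofold. The analytic part is the uniformity of the asymptotics: I must verify that the leading-order replacements ($\bD_2\to\sm^2\al^{-2}\bI$, $\bD_3^{-2}\to(\al^2/\sm^2)\bI$, $\Sm_i\to q_i$) hold uniformly enough that the maximizer of the exact objective converges to the maximizer of the limiting Rayleigh quotient, i.e.\ that interchanging ``take the limit'' with ``maximize over $\bar\bG$'' is legitimate, rather than merely establishing pointwise convergence for each fixed $\bar\bG$. The bookkeeping part is reconciling the residual power factors, since the MSE weighting contributes $q_\ell^2 w_\ell$ while the rate weighting contributes $q_\ell^2(t_\ell/\Sm_\ell)\to q_\ell t_\ell$; the two $\bPsi$ matrices share the dominant eigenvector exactly under the equal-power normalization adopted in this regime, and I would make that normalization explicit so the claimed identity of the four precoders is clean.
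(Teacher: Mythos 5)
Your proposal is correct and follows essentially the same route as the paper's Appendix~A: expand $\al$, $\bB^{opt}$, and $\bar\bC^{opt}$ to leading order as $\sm^2,\gm^2\rightarrow+\infty$, reduce both objectives to maximizing $\sum_{\ell} q_\ell^2 w_\ell|\bp_\ell^T\bF\bG\bh_\ell|^2$ subject to $\tr[\bG\bG^H]\le P_r/\gm^2$, and vectorize to the Rayleigh quotient of $\bPsi$, whose maximizer is the dominant eigenvector. The only cosmetic differences are that you reuse the completed-square form \eqref{71} instead of re-expanding $\mathcal{J}$ from \eqref{opt_mse_x1a}, and you treat the rate problem through the MSE equivalence \eqref{wbc} with $\Sm_i\to q_i$ (much as the paper does in Appendix~B for Proposition~2) rather than the direct first-order SINR expansion \eqref{76}; your explicit caveat that the rate weighting degenerates to $q_\ell t_\ell$ versus the MSE weighting $q_\ell^2 w_\ell$, so that the two dominant eigenvectors coincide only up to the (equal-power) normalization, is a point the paper's own derivation glosses over.
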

\end{framed}

At low SNR, the optimal precoder is identical with and without PNC in the limit. This is not surprising because when the noise dominates the received signal, the benefit of  self interference cancellation is marginal.


\begin{Proposition}  \label{prop_low_converge}
As $\sm^2,\gm^2\rightarrow +\infty$, Algorithms $1$ and $2$ converge to the same global optimum $\bG^0$ given in \eqref{44}.
\end{Proposition}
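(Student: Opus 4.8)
The plan is to show that, in the limit $\sm^2,\gm^2\to+\infty$, one full iteration of each algorithm acts on $vec(\bar\bG)$ as a (normalized) power iteration for the matrix $\bPsi$ in \eqref{psi}, so that the iterates converge in direction to the dominant eigenvector $vec(\bar\bG^0)$, which by Theorem \ref{theorem_low} yields the global optimum $\bG^0=\al\bar\bG^0$ in \eqref{44}. Since $\bPsi$ is Hermitian and positive semidefinite (each summand is a Kronecker product of two rank-one Hermitian PSD factors, scaled by a positive weight), this is exactly the regime in which power iteration converges, and the identification of the fixed point with $\bG^0$ is furnished by Theorem \ref{theorem_low}. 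Thus the substantive work is to verify that the composite update reduces, in the limit, to multiplication by $\bPsi$ up to a positive scalar.

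First I would carry out a dominant-balance analysis of the two update steps of Algorithm $1$. In \eqref{tGopt_mmse}, as $\sm^2,\gm^2\to+\infty$ the term $\frac{\sm^2}{P_r}\tr[\bW\bar\bC\bar\bC^H]\bI$ dominates the matrix to be inverted while $(\bH\bQ\bH^H+\gm^2\bI)^{-1}\to\gm^{-2}\bI$, so $\bar\bG^{opt}$ becomes proportional to $\bF^H\bar\bC^H\bW(\bP+\bB)\bQ\bH^H$; moreover, since $\bB^{opt}=\bD_1\bar\bC^{opt}$ by \eqref{16} is of the same (vanishing) order as $\bar\bC$, the term $\bar\bC^H\bW\bB\bQ\bH^H$ is of higher order and the PNC and non-PNC updates coincide in the limit, leaving $\bar\bG^{opt}\propto\bF^H\bar\bC^H\bW\bP\bQ\bH^H$. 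In \eqref{bc}, the noise-growing matrix $\bD_2$ dominates $\bD_1^H\bD_1$ (which is $O(1)$), and in fact $\bD_2$ is dominated by the noise term $\sm^2\al^{-2}\bI$, so $\bar\bC^{opt}$ is proportional to $[\bP\bQ\bH^H\bar\bG^H\bF^H]_{\rm diag}$ with a single global positive scalar; concretely its $\pi(i)$th diagonal entry is proportional to $q_i\bh_i^H\bar\bG^H\bF^H\bp_i$. Composing these two maps and using $vec(\bA\bX\bB)=(\bB^T\otimes\bA)vec(\bX)$ together with $\bh_i^*\bh_i^T=(\bh_i\bh_i^H)^T$, a direct calculation collapses the composite update to $vec(\bar\bG_{\mathrm{new}})\propto\bPsi\,vec(\bar\bG_{\mathrm{old}})$, which is the crux of the argument.

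With the composite map identified, the convergence of Algorithm $1$ follows from the standard theory of power iteration: because $\bPsi\succeq 0$, provided its largest eigenvalue is simple and the initialization $(\bB_0,\bar\bC_0)$ induces a first $\bar\bG$ iterate with nonzero projection onto the corresponding eigenvector (the ``proper initialization'' referenced in Section V), the normalized iterates converge to $vec(\bar\bG^0)$, after which the scaling $\al$ in \eqref{al} fixes the magnitude so that the relay power constraint holds with equality. By Theorem \ref{theorem_low} this limit is exactly $\bG^0$.

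For Algorithm $2$ the same template applies. The $\bar\bG$ update \eqref{Gnopt_rate} has identical form under the correspondence $\bW=\bT\bSm^{-1}$, $\bar\bC=\bar\bOm$, $\bB=\bar\bOm\bar\bDt$, and the self-interference weight $\bar\bDt^{opt}=[\bF\bar\bG\bH]_{\rm diag}$ from \eqref{bopt} again enters only at higher order in the low-SNR limit. The new ingredient is the MMSE step \eqref{Smi}: as the observation $z_i$ becomes uninformative, the {\em a posteriori} variance $\Sm_i=q_i-|\mathcal{C}_{x_iz_i}|^2/\mathcal{C}_{z_iz_i}\to q_i$, so the varying weight stabilizes, $\bW=\bT\bSm^{-1}\to\bT\bQ^{-1}$, and $\bar\bOm$ plays precisely the role of $\bar\bC$ above; the composite update is again a power iteration whose dominant eigenvector, by Theorem \ref{theorem_low}, is the same $\bar\bG^0$. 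I expect the main obstacle to be the bookkeeping in the dominant-balance step: verifying that all neglected contributions (the self-interference terms, the $\bD_1^H\bD_1$ correction, and the subdominant pieces of $\bD_2$) are of strictly higher order and vanish uniformly, so that the limiting iteration is genuinely the power iteration for $\bPsi$ rather than a perturbation of it, and confirming the simplicity of the top eigenvalue that power iteration requires.
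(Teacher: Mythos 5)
Your proposal follows essentially the same route as the paper's Appendix B proof: first-order expansion of the two update steps in the large-noise limit, identification of the composite map on $vec(\bar\bG)$ as a normalized power iteration for $\bPsi$ in \eqref{psi}, and reduction of Algorithm $2$ to Algorithm $1$ by showing that the posterior variances (hence the effective weights $\bT\bSm^{-1}$) stabilize. The only small divergence is that you obtain $\Sm_i\to q_i$ (so $\bT\bSm^{-1}\to\bT\bQ^{-1}$) where the paper asserts $\bSm\approx\bI$; yours is the more careful computation, and the two coincide under the paper's unit-power normalization, while your explicit statement of the power-iteration hypotheses (simple top eigenvalue, nonzero projection of the initial iterate) makes precise a condition the paper leaves implicit.
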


The proof of Proposition \ref{prop_low_converge} is given in Appendix \ref{prop_low_converge_p}. According to Proposition \ref{prop_low_converge}, the point of convergence of Algorithms $1$ and $2$ is not sensitive to the initial condition in the low SNR regime.

\subsection{High-SNR Analysis}

In the high SNR regime, we are interested in the limit of $\sm^2,\gm^2\rightarrow 0$.
Then, we discuss the degree of freedom (DoF) that the system could achieve with the setup of different numbers of relay antennas and users. 

First, for $N\ge K$, the asymptotically optimal precoders are described as follows, where the proof is given in Appendix \ref{theorem_highp}.

\begin{framed}
\begin{Theorem}  \label{theorem_high}
Suppose $N\ge K$. In the limit of $\sm^2,\gm^2\rightarrow 0$, the asymptotically optimal precoders for weighted sum MSE minimization and weighted sum rate maximization, with and without PNC, are identical and can be expressed as
\be  \label{asymp_high}
\bG^\infty = \bF^{\dag}\bC^{-1}(\bP+\bB)\bH^{\dag},
\ee
where $\bC\in \mathbb{C}^{K\times K}$ is diagonal, and $\bB\in \mathbb{C}^{K\times K}$ is an all-zero matrix in the non-PNC case and is a diagonal matrix in the PNC case.
\end{Theorem}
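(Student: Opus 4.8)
The plan is to derive $\bG^\infty$ by passing to the high-SNR limit $\sm^2,\gm^2\to 0$ directly in the closed-form optimal precoder \eqref{tGopt_mmse} of Proposition~\ref{prop_mmse_1}. Since the $\bG$-subproblem is convex with the unique minimizer \eqref{tGopt_mmse}, the optimal precoder at each noise level equals that formula evaluated at the accompanying optimal $(\bB,\bar\bC)$, so the asymptotically optimal precoder is simply its limit. The assumption $N\ge K$ enters precisely by making $\bH$ full column rank and $\bF$ full row rank. The whole difficulty is that \eqref{tGopt_mmse} factors into a left term and a right term that each blow up as the noise vanishes (the regularizer $\frac{\sm^2}{P_r}\tr[\bW\bar\bC\bar\bC^H]\bI$ disappears while $\bF^H\bar\bC^H\bW\bar\bC\bF$ is rank-deficient when $N>K$, and $(\bH\bQ\bH^H+\gm^2\bI)^{-1}$ is singular in the limit when $N>K$), so I would treat the two factors separately via push-through identities and then recombine.

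First I would handle the right factor $\bQ\bH^H(\bH\bQ\bH^H+\gm^2\bI)^{-1}$. The push-through identity $\bH^H(\bH\bQ\bH^H+\gm^2\bI)^{-1}=(\bH^H\bH\bQ+\gm^2\bI)^{-1}\bH^H$ moves the thin factor across the inverse; because $\bH$ is full column rank, $\bH^H\bH\bQ$ is invertible, so $\gm^2\to 0$ gives $\bQ\bH^H(\bH\bQ\bH^H+\gm^2\bI)^{-1}\to\bQ(\bH^H\bH\bQ)^{-1}\bH^H=(\bH^H\bH)^{-1}\bH^H=\bH^\dag$. This is the step that consumes the hypothesis $N\ge K$.

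Next I would treat the left factor. With $a\triangleq\frac{\sm^2}{P_r}\tr[\bW\bar\bC\bar\bC^H]$ and $\bE\triangleq\bar\bC\bF$, it reads $(a\bI+\bE^H\bW\bE)^{-1}\bE^H\bW$, and the push-through identity gives $(a\bI+\bE^H\bW\bE)^{-1}\bE^H\bW=\bE^H(a\bW^{-1}+\bE\bE^H)^{-1}$. Since $\bF$ is full row rank and $\bar\bC$ invertible, $\bE\bE^H$ is invertible, so once $a\bW^{-1}\to 0$ the limit is $\bE^H(\bE\bE^H)^{-1}=\bE^\dag=(\bar\bC\bF)^\dag=\bF^\dag\bar\bC^{-1}$, which is notably independent of $\bW$. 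Combining the two limits, $\bar\bG^{opt}\to\bF^\dag\bar\bC^{-1}(\bP+\bB)\bH^\dag$, and restoring $\bG^{opt}=\al\bar\bG^{opt}$ with $\bar\bC=\al\bC$ cancels $\al$ to produce $\bG^\infty=\bF^\dag\bC^{-1}(\bP+\bB)\bH^\dag$; here $\bB=0$ in the non-PNC case while $\bB$ is diagonal in the PNC case by \eqref{16}, and a one-line check gives $\bC\bF\bG^\infty\bH=\bP+\bB$, i.e. exact zero-forcing, confirming the interpretation.

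For weighted sum rate I would rerun the argument on \eqref{g2}, which coincides with \eqref{tGopt_mmse} under the substitution \eqref{wbc}, namely $\bW=\bT\bSm^{-1}$, $\bB=\bar\bOm\bar\bDt$, $\bar\bC=\bar\bOm$, with $\bar\bDt$ diagonal by \eqref{bopt}; since the left-factor limit is independent of $\bW$, the rate problem yields the same structural form, which establishes the claim that the solutions are identical with and without PNC and across the two criteria. The main obstacle is the rate case, where $\bW=\bT\bSm^{-1}\to\infty$ because the posterior variances $\Sm_i$ collapse at high SNR, so I must still certify $a\bW^{-1}\to 0$. I would establish the scaling $\Sm_i=\Theta(\sm^2)$ along any interference-cancelling sequence, which keeps $a=\frac{\sm^2}{P_r}\tr[\bW\bar\bC\bar\bC^H]$ bounded while $\bW^{-1}\to 0$, and verify that the auxiliary variables $\bar\bOm,\bar\bC$ stay bounded and bounded away from singular so that $\bar\bC^{-1}$ survives the limit; pinning down these scaling and boundedness facts is the delicate part of the proof.
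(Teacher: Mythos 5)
Your proposal is correct and follows essentially the same route as the paper's Appendix~\ref{theorem_highp}: substitute the closed-form stationary-point precoders \eqref{tGopt_mmse} and \eqref{Gnopt_rate} and pass to the limit $\sm^2,\gm^2\rightarrow 0$, with $N\ge K$ supplying the rank conditions that make the limits the pseudo-inverses $\bF^\dag$ and $\bH^\dag$. The only difference is one of rigor rather than of method: the paper disposes of the degenerate left factor via the identity $\left(\bF^H\bar\bC^H\bW\bar\bC\bF\right)^{\dag}=\bF^{\dag}\bar\bC^{-1}\bW^{-1}(\bar\bC^H)^{-1}(\bF^H)^{\dag}$ and simply "lets $\sm^2,\gm^2\rightarrow 0$," whereas you justify the same limits with push-through identities and explicitly track the scaling $\bW=\bT\bSm^{-1}$, $\Sm_i=\Theta(\sm^2)$ in the sum-rate case---a point the paper only touches on in passing in the proof of Proposition~\ref{prop_high_converge}.
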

\end{framed}

It has been shown in \cite{wms12jsac} that the precoder in \eqref{asymp_high} forces all the interference to zero, and hence is referred to as the zero-forcing precoder. Theorem \ref{theorem_high} reveals that zero-forcing precoding is asymptotically optimal when the relay has no fewer antennas than the number of users (i.e., $N\ge K$). In fact, for the case with PNC, full degrees of freedom can be achieved when the relay has $N=K-1$ antennas only. But the optimal solution in this case is not in the form of \eqref{asymp_high}. 

Next, we discuss the asymptotically optimal precoder for $N=K-1$. 
\begin{Proposition}  \label{prop_N=K-1}
Suppose $N=K-1$. In the limit of $\sm^2,\gm^2\rightarrow 0$, the asymptotically optimal precoders for the weighted sum MSE minimization and the weighted sum rate maximization, are identical and can be expressed as
\be \label{g_new}
\bG^{\infty}= vec (\bn_{\bS}) \al,
\ee
where $\al$ is determined by the power constraint at the relay, and $\bn_{\bS}$ is the null vector of $\bS$. The columns of $\bS\in \mathcal C^{(K-1)\times K(K-2)}$ are given by $\bff_{j} \otimes \bh_{i}$ for $i,j=1,\cdots,K,\ \text{and } j\ne i,\ j\ne \pi(i)$.
\end{Proposition}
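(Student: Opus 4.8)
The plan is to show that, as $\sm^2,\gm^2\to 0$, asymptotic optimality for both criteria is equivalent to forcing every non-cancellable interference term to zero, and that under PNC with $N=K-1$ these zero-forcing constraints determine $vec(\bG)$ up to a single scalar, which is exactly \eqref{g_new}.

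First I would establish that exact interference nulling is necessary. As the noise powers vanish, a user's rate \eqref{rate_x} grows without bound (equivalently, its residual MSE vanishes and it attains full degrees of freedom) only if $\mathsf{SINR}_i$ in \eqref{sinr} diverges; any surviving interference keeps the link interference-limited and saturates the rate at a finite value, strictly losing DoF. From \eqref{sinr}, $\mathsf{SINR}_i\to\infty$ requires every interference contribution in the denominator to vanish while the desired gain $\bff_{\pi(i)}^T\bG\bh_i$ is bounded away from zero. Under PNC the self-interference term ($\ell=\pi(i)$) is removed by $\bar\bDt^{opt}=[\bF\bar\bG\bH]_{\rm diag}$ in \eqref{bopt}, and the desired term ($\ell=i$) must be preserved; hence the surviving requirements are $\bff_{\pi(i)}^T\bG\bh_\ell=0$ for all $\ell\ne i,\ \ell\ne\pi(i)$. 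Reindexing the receiver as $j=\pi(i)$, these are precisely $\bff_j^T\bG\bh_i=0$ for $j\ne i$ and $j\ne\pi(i)$.

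Next I would linearize and count. Each scalar equation $\bff_j^T\bG\bh_i=0$ is linear in $vec(\bG)$ and, with the $vec(\cdot)$ convention used here, reads $(\bff_j\otimes\bh_i)^T vec(\bG)=0$; collecting the $K(K-2)$ coefficient vectors $\bff_j\otimes\bh_i$ as the columns of $\bS$, the stacked nulling condition is $\bS^T vec(\bG)=0$, so $vec(\bG)$ lies in the null space of $\bS^T$. The dimension count is the crux: $\bG\in\mathbb{C}^{(K-1)\times(K-1)}$ gives $vec(\bG)$ a total of $(K-1)^2$ free entries, while the number of constraints is $K(K-2)=(K-1)^2-1$. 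Thus, for channels in general position (so that $\bS$ has full column rank $K(K-2)$), this null space is exactly one-dimensional and is spanned by the vector $\bn_{\bS}$ of the statement, giving $vec(\bG^{\infty})=\al\,\bn_{\bS}$. The scalar $\al$ is then fixed by meeting the relay power constraint \eqref{opt_rate_b} with equality, since expending all available power is optimal as the noise vanishes. Because the identical nulling conditions minimize the weighted sum MSE and maximize the weighted sum rate in the limit, the two optimal precoders coincide. Without PNC one additional constraint per receiver would be needed, i.e. $K(K-1)>(K-1)^2$ equations, which is infeasible, and this is why full-DoF operation at $N=K-1$ is special to the PNC case.

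The hard part, I expect, is the necessity-and-genericity step rather than the algebra. I would need to rule out asymptotically optimal solutions that tolerate vanishing-but-nonzero interference, by showing that any residual interference bounds $\mathsf{SINR}_i$ and therefore strictly forfeits DoF, so that full DoF forces exact nulling in the limit. I would also have to justify the genericity claims: that $\bS$ attains full column rank $K(K-2)$, and, crucially, that the resulting $\bn_{\bS}$ yields a $\bG^{\infty}$ whose desired gains $\bff_{\pi(i)}^T\bG^{\infty}\bh_i$ are all nonzero, so that $\al$ is finite and the precoder is usable; both should hold for almost every $(\bH,\bF)$ but require an explicit argument. Finally, I would check that this $\bG^{\infty}$ agrees with the convergence points of Algorithms $1$ and $2$, paralleling the treatment of the $N\ge K$ case in Theorem \ref{theorem_high}.
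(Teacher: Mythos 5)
Your proposal is correct and follows essentially the same route as the paper: reduce asymptotic optimality to the zero-forcing conditions $\bff_j^T\bG\bh_i=0$ for $j\ne i,\ j\ne\pi(i)$, rewrite them as $\bS^T vec(\bG)=\mathbf{0}$, and observe that with $(K-1)^2$ unknowns against $K(K-2)=(K-1)^2-1$ generically independent constraints the null space is one-dimensional, fixing $\bG^\infty$ up to the power-normalizing scalar $\al$. Your version is in fact somewhat more careful than the paper's, which simply invokes Theorem 2 for the necessity of zero forcing and asserts full rank of $\bS$ with probability one, whereas you explicitly flag the necessity and genericity steps (including that the desired gains must remain nonzero) as the points requiring justification.
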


\begin{proof}
As concluded from Theorem $2$, zero forcing relaying is required to achieve full DoF. Each user can see the signals from the desired user and itself only. This can be achieved by
\be
\bff_{j}^T \bG \bh_{i} =  (\bh_{i} \otimes \bff_{j})^T vec(\bG)=0,\quad \text{for } i,j=1,\cdots,K,\ \text{and } j\ne i,\ j\ne \pi(i),
\ee
where $\bff_{j}\in
\mathbb{C}^{N\times 1}$ be the downlink channel vector from the relays to user $j$, i.e., the $j$th row of $\bF$,
or equivalently in a matrix form as
\be \label{gS=0}
\bS^T vec(\bG) = {\bf0}.
\ee

To ensure full DoF, there must exist non-zero $\bG$ that satisfies \eqref{gS=0}, or equivalently, the null space of $\bS$ does not only consist of the zero vector. By definition, $\bS$ is determined by the channel matrices $\bF$ and $\bH$. For randomly generated $\bF$ and $\bH$, $\bS$ is of full rank with probability one. Therefore, a non-trivial null space of $\bS$ requires that the number of relays exceeds the column rank of $\bS$, i.e., $(K-1)^2 > K(K-2)$. Thus, we conclude \eqref{g_new}.
\end{proof}

So far, we have assumed $N\ge K-1$ to provide enough DoF for zero-forcing relaying. Otherwise, the relay is incapable of forcing every inter-user interference to zero simultaneously. For $N < K-1$, we can circumvent this issue by user scheduling. Specifically, we deactivate some users to ensure that the number of the active users does not exceed $N$ for the non-PNC case and $N+1$ for the PNC case. Then, once the scheduling strategy is given, the precoder proposed in this section can be directly applied. The design of the scheduling strategy is not the focus, thus is omitted in this paper.


\begin{Proposition}  \label{prop_high_converge}
Suppose $N\ge K$. As $\sm^2,\gm^2\rightarrow 0$, Algorithms $1$ and $2$ converge to $\bG=\bF^{\dag}\bC_0^{-1}(\bP+\bB_0)\bH^{\dag}$ for any initial values $\bB=\bB_0$ and $\bC=\bC_0$.
\end{Proposition}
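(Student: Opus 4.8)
The plan is to take the $\sm^2,\gm^2\rightarrow 0$ limit of the closed-form precoder update and then show that (i) a single application of that update already lands on the zero-forcing precoder $\bF^\dag\bC_0^{-1}(\bP+\bB_0)\bH^\dag$, and (ii) this point is a fixed point of the full iteration, so the algorithm never moves away from it. Because $N\ge K$, the uplink channel $\bH\in\mathbb{C}^{N\times K}$ is generically full column rank and the downlink channel $\bF\in\mathbb{C}^{K\times N}$ is full row rank, so $\bF\bF^\dag=\bI$ and $\bH^\dag\bH=\bI$; these two identities are exactly what make the zero-forcing structure self-consistent.

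First I would analyze the $\bar\bG$-update \eqref{tGopt_mmse}. Writing $\mu=\frac{\sm^2}{P_r}\tr[\bW\bar\bC\bar\bC^H]$, it reads $\bar\bG^{opt}=(\mu\bI+\bF^H\bar\bC^H\bW\bar\bC\bF)^{-1}\bF^H\bar\bC^H\bW(\bP+\bB)\bQ\bH^H(\bH\bQ\bH^H+\gm^2\bI)^{-1}$. The two inverses are of $N\times N$ matrices of rank only $K$, so the limit is singular and must be read as a Tikhonov/ridge limit. The key tools are $\lim_{\mu\to0^+}(\mu\bI+\bA^H\bA)^{-1}\bA^H=\bA^\dag$ and the push-through identity $\bA^H(\bA\bA^H+\mu\bI)^{-1}=(\bA^H\bA+\mu\bI)^{-1}\bA^H$. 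Applying the former to $\bA=\bW^{1/2}\bar\bC\bF$ and absorbing the diagonal weights (using $(\bW_c^{1/2}\bF)^\dag=\bF^\dag\bW_c^{-1/2}$, valid since $\bF$ is full row rank, with $\bW_c=\bW|\bar\bC|^2$) sends the left factor to $\bF^\dag\bar\bC^{-1}$; applying the push-through identity to $\bA=\bH\bQ^{1/2}$ sends the right factor $\bQ\bH^H(\bH\bQ\bH^H+\gm^2\bI)^{-1}$ to $\bH^\dag$. Hence $\bar\bG^{opt}\rightarrow\bF^\dag\bar\bC^{-1}(\bP+\bB)\bH^\dag$ for \emph{any} invertible $\bar\bC$ and any $\bB$, and after the power-normalizing rescaling by $\al$ this equals $\bF^\dag\bC^{-1}(\bP+\bB)\bH^\dag$ in the unscaled variables. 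Taking $(\bB,\bar\bC)=(\bB_0,\bar\bC_0)$ yields (i) and places the iterate in the optimal family of Theorem \ref{theorem_high} with the member selected by $(\bB_0,\bC_0)$. For Algorithm 2 the same reduction goes through verbatim via \eqref{wbc} ($\bW=\bT\bSm^{-1}$, $\bB=\bar\bOm\bar\bDt$, $\bar\bC=\bar\bOm$), so \eqref{Gnopt_rate} produces the identical limiting form.

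Next I would verify the fixed-point property by running the other half of the iteration on this precoder. With $\bar\bG=\bF^\dag\bar\bC^{-1}(\bP+\bB)\bH^\dag$ the effective channel is $\bE\triangleq\bF\bar\bG\bH=\bar\bC^{-1}(\bP+\bB)$, whose off-diagonal entries are $\bar c_i^{-1}\bP_{ik}$ and whose diagonal is $\bar\bC^{-1}\bB$ because $\bP$ is a derangement (zero diagonal). Substituting $\bE$ into the $(\bB,\bar\bC)$-update of Proposition \ref{prop_mmse_2} and discarding the vanishing noise terms $\gm^2\bI$ and $\sm^2\al^{-2}\bI$ in $\bD_2$, a short computation gives $\bD_1=[\bE]_{\rm diag}=\bar\bC^{-1}\bB$, $(\bD_2-\bQ\bD_1^H\bD_1)_{ii}=\sum_{k\ne i}|\bE_{ik}|^2q_k=|\bar c_i|^{-2}q_{\pi^{-1}(i)}$, and $\bS_{ii}=[\bE\bQ\bP^T]_{ii}=\bar c_i^{-1}q_{\pi^{-1}(i)}$, whence $\bar\bC^{opt}=(\bD_2-\bQ\bD_1^H\bD_1)^{-1}\bS^H=\bar\bC$ and $\bB^{opt}=\bD_1\bar\bC^{opt}=\bB$. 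Thus the update returns $(\bB,\bar\bC)$ unchanged, and the subsequent $\bar\bG$-update reproduces the same precoder, so the point is fixed. For Algorithm 2 the analogous computation through \eqref{wbc}, together with \eqref{bopt} and the \emph{a posteriori} formulas \eqref{Smi}, shows the $(\bar\bDt,\bar\bOm,\bSm)$-update is likewise stationary. Combining (i) with the fixed-point property, every iterate after the first equals $\bF^\dag\bC_0^{-1}(\bP+\bB_0)\bH^\dag$, which is the asserted limit.

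I expect the main obstacle to be the rigorous justification of the two pseudoinverse limits in the rank-deficient regime $N>K$: one must show that the regularizations commute with the diagonal weights and that $(\mu\bI+\bF^H\bW_c\bF)^{-1}\bF^H\bar\bC^H\bW$ genuinely converges to $\bF^\dag\bar\bC^{-1}$ rather than to some other pseudoinverse expression, which hinges on $\bF$ being full row rank and $\bW_c$ being positive definite (both generic for $N\ge K$ and invertible $\bar\bC$). A secondary subtlety, relevant only to Algorithm 2, is that the \emph{a posteriori} variances $\Sm_i$ in \eqref{Smi} tend to zero as the noise vanishes, so $\bW=\bT\bSm^{-1}$ blows up; I would handle this by tracking the ratios $\sm^2/\Sm_i$ and the product $\bar\bOm^H\bT\bSm^{-1}\bar\bOm$, which stay bounded and leave the limiting zero-forcing form \eqref{asymp_high} intact.
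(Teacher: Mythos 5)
Your proposal is correct and follows essentially the same route as the paper's Appendix D: take the $\sm^2,\gm^2\to 0$ limit of the $\bar\bG$-update to obtain the zero-forcing form $\bF^{\dag}\bar\bC^{-1}(\bP+\bB)\bH^{\dag}$, then check that the $(\bB,\bar\bC)$-update (respectively the $(\bar\bDt,\bar\bOm,\bSm)$-update via \eqref{wbc}) leaves these variables unchanged, so the iteration is trapped at the initial values. You supply more detail than the paper does — the explicit Tikhonov-limit/push-through justification of the pseudoinverse forms and the entrywise fixed-point computation with $\bE=\bar\bC^{-1}(\bP+\bB)$ — but the underlying argument is the same.
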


Proposition \ref{prop_high_converge} suggests that, in the high SNR regime, the convergence points of Algorithms $1$ and $2$ highly depend on the initial conditions.  Therefore, it is necessary to carefully choose $\bB_0$ and $\bC_0$ in the high SNR regime, as detailed below. Together with Proposition \ref{prop_low_converge}, it is suggested that Algorithm $1$ and $2$ should be initialized by high-SNR aysmptotically optimal/suboptimal solutions over all SNR regimes. 

Note that the high-SNR asymptotically optimal precoder for $N=K-1$ is determined by \eqref{g_new}, where $\al$ is determined by the relay power constraint. Thus, we could use the high-SNR asymptotically optimal precoder as the initial values for the iterative algorithms. 

Next, we consider the results for $N\ge K$.

\subsubsection{Weighted sum MSE minimization}

Plugging in the optimal precoder \eqref{asymp_high} and ignoring the high-order infinitesimals, we can rewrite the weighted sum MSE minimization problem in \eqref{opt_mse} as
\begin{subequations} \label{pmse}
\begin{align}
\mathop\text{minimize}_{\boldsymbol{B},\boldsymbol{C}}  &\qquad \tr\left[\bW(\gm^2(\bP+\bB)(\bH^{H}\bH)^{-1}(\bP+\bB)^H +\sm^2 \bC\bC^H)\right] \label{14a}\\
\text{subject to} &\qquad  \text{Tr}[(\bF\bF^{H})^{-1}\bC^{-1}(\bP+\bB)\bQ(\bP+\bB)^H \bC^{-H}]\le P_r \label{14b}\\
& \qquad \bB\text{ and } \bC \text{ are diagonal}. \label{diag}
\end{align}
\end{subequations}





Problem \eqref{pmse} is in general non-convex and is difficult to solve. In the following, we propose a suboptimal solution to iteratively optimize $\bB$ and $\bC$. Define the index mapping $\jmath=\pi(\imath)$, $\imath=1,\cdots,K$.

\begin{Proposition}  \label{prop_high_mmse_2}
(i) Given $\{c_\jmath\}$, the problem \eqref{pmse} is a convex problem of $\{b_\jmath\}$ with the optimal solution
\be
b_\jmath = -\frac{\gm^2 w_\jmath h_{\imath\jmath} + \lb q_{\jmath} f_{\jmath,\pi(\jmath)}(c_\jmath^*)^{-1}c^{-1}_{\pi(\jmath)}}{\gm^2w_{\jmath} h_{\jmath\jmath} + \lb q_{\jmath} f_{\jmath\jmath}|c_\jmath|^{-2} },\quad \jmath = 1,\cdots,K,
\ee
where $\lb$ is a scalar to meet the relay power constraint,  $f_{ij}$ is element $(i,j)$ of $(\bF\bF^H)^{-1}$, and $h_{ij}$ is element $(i,j)$ of $(\bH^H\bH)^{-1}$.
(ii) Given $\{b_\jmath\}$, the optimal phases of $\{c_\jmath\}$ to the problem \eqref{pmse} are given by
\be \label{47}
\angle c_\jmath&=&\angle c_\imath-\angle f_{\jmath\imath}-\angle b_{\imath}-\pi,\quad \jmath = 1,\cdots,K.
\ee
(iii) Given $\{b_\jmath\}$ and the phases of $\{c_\jmath\}$ in \eqref{47}, problem \eqref{pmse} is convex in $\{|c_\jmath|^{-2}\}$.
\end{Proposition}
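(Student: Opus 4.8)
The plan is to exploit the diagonal/permutation structure of every matrix in \eqref{pmse} to reduce the objective and constraint to scalar sums over the users, and then to dispatch the three parts by Lagrangian stationarity and elementary convexity. Write $\bN\triangleq(\bF\bF^H)^{-1}=[f_{ij}]$ and abbreviate $(\bH^H\bH)^{-1}=[h_{ij}]$, both Hermitian positive definite, and set $\bY\triangleq(\bP+\bB)\bQ(\bP+\bB)^H$. Since row $\jmath$ of $\bP+\bB$ has a $1$ in column $\imath=\pi^{-1}(\jmath)$ and $b_\jmath$ in column $\jmath$ (with $\imath\ne\jmath$ because $\pi$ is a derangement), a short computation shows that the only nonzero entries of $\bY$ are the diagonal $Y_{\jmath\jmath}=q_{\imath}+|b_\jmath|^2q_\jmath$ and the conjugate off-diagonal pairs $Y_{\jmath,\pi(\jmath)}=q_\jmath b_\jmath$, $Y_{\pi(\jmath),\jmath}=q_\jmath b_\jmath^*$. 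With this, the objective \eqref{14a} becomes $\gm^2\sum_\jmath w_\jmath(h_{\imath\imath}+2\Re\{b_\jmath^*h_{\imath\jmath}\}+|b_\jmath|^2h_{\jmath\jmath})+\sm^2\sum_\jmath w_\jmath|c_\jmath|^2$, and the constraint \eqref{14b}, which equals $\tr[\bN\bC^{-1}\bY\bC^{-H}]$, splits into the phase-free diagonal terms $f_{\jmath\jmath}|c_\jmath|^{-2}Y_{\jmath\jmath}$ plus the off-diagonal pairs $2q_\jmath\Re\{f_{\jmath,\pi(\jmath)}\,b_\jmath^*\,c_{\pi(\jmath)}^{-1}(c_\jmath^*)^{-1}\}$.

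For part (i) I fix all the $c_\jmath$ and minimize over $\bB$. The objective above is convex in $\bB$ (a positive-semidefinite quadratic, the second sum being constant), and the constraint can be written as $\|(\bC^{-H}\bN\bC^{-1})^{1/2}(\bP+\bB)\bQ^{1/2}\|_F^2\le P_r$, a squared Frobenius norm of an affine function of $\bB$ with $\bC^{-H}\bN\bC^{-1}\succeq 0$, hence a convex constraint; so the $\{b_\jmath\}$-problem is convex and the KKT conditions are necessary and sufficient. I then form the Lagrangian with a single multiplier $\lb$ for the power constraint and take the Wirtinger derivative $\partial/\partial b_\jmath^*$. Because $b_\jmath$ enters only $Y_{\jmath\jmath}$ and the pair indexed by $\jmath$, stationarity decouples across $\jmath$ and reads $\gm^2 w_\jmath(h_{\imath\jmath}+h_{\jmath\jmath}b_\jmath)+\lb q_\jmath(f_{\jmath\jmath}|c_\jmath|^{-2}b_\jmath+f_{\jmath,\pi(\jmath)}c_{\pi(\jmath)}^{-1}(c_\jmath^*)^{-1})=0$; solving for $b_\jmath$ gives exactly the claimed expression, with $\lb$ fixed by activating the power constraint.

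For part (ii) the key observation is that the phases of the $c_\jmath$ appear nowhere in the objective (the $\gm^2$-term has no $c$, and the $\sm^2$-term depends only on $|c_\jmath|$), so the optimal phases are those that minimize the constraint for fixed magnitudes and fixed $\bB$. The diagonal part of the constraint is phase-free, so it suffices to minimize each conjugate off-diagonal pair, which for transmitter $\imath$ and receiver $\jmath=\pi(\imath)$ equals $2q_\imath|f_{\imath\jmath}||b_\imath|\,|c_\imath|^{-1}|c_\jmath|^{-1}\cos(\angle f_{\imath\jmath}-\angle b_\imath-\angle c_\jmath+\angle c_\imath)$; this is smallest when the cosine argument equals $\pi$, i.e. $\angle c_\jmath=\angle c_\imath+\angle f_{\imath\jmath}-\angle b_\imath-\pi$, which is \eqref{47} after using $\angle f_{\imath\jmath}=-\angle f_{\jmath\imath}$. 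The main obstacle is that each $c_\jmath$ sits in two such pairs, linking it to both $c_{\pi^{-1}(\jmath)}$ and $c_{\pi(\jmath)}$, so the per-pair minimizers are coupled along the cycles of $\pi$; I will therefore argue that \eqref{47} is realized by propagating the phase around each cycle from one free reference phase, and the delicate point to verify (or to note the genericity under which it holds) is that this accumulated relation closes up consistently. This permutation-cycle bookkeeping, together with the off-diagonal accounting in the opening reduction, is where essentially all the work lies.

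For part (iii) I substitute $u_\jmath\triangleq|c_\jmath|^{-2}>0$ and insert the phases of \eqref{47}, at which every cosine above equals $-1$. The objective reduces to $\sm^2\sum_\jmath w_\jmath u_\jmath^{-1}$, which is convex since $u\mapsto 1/u$ is convex on $(0,\infty)$. The constraint becomes $\sum_\jmath f_{\jmath\jmath}Y_{\jmath\jmath}\,u_\jmath-\sum_\imath 2q_\imath|f_{\imath,\pi(\imath)}||b_\imath|\sqrt{u_\imath u_{\pi(\imath)}}\le P_r$: the first sum is linear in $u$, while each $\sqrt{u_\imath u_{\pi(\imath)}}$ is a concave geometric mean on the positive orthant, so its negative is convex. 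Hence the constraint function is convex and the feasible set is convex, and with the convex objective this shows \eqref{pmse} is convex in $\{|c_\jmath|^{-2}\}$, completing the proposition; parts (i) and (iii) are then routine Lagrangian and convexity calculations once the scalar reduction is in place.
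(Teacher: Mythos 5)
Your proof follows essentially the same route as the paper's Appendix~E: scalarize the objective and constraint using the structure of $\bP+\bB$, establish part (i) by noting that both are quadratic in $\{b_\jmath\}$ for fixed $\{c_\jmath\}$ and applying the Lagrangian/KKT conditions, obtain part (ii) from the phase-independence of the objective by minimizing the cross terms of the power constraint, and prove part (iii) via the substitution $\eta_\jmath=|c_\jmath|^{-2}$ together with concavity of the geometric mean $\sqrt{\eta_\jmath\eta_\imath}$. The cycle-consistency issue you flag in part (ii) is genuine --- the per-pair conditions \eqref{47} propagate the phase around each cycle of $\pi$ and the accumulated relation generically does not close up --- but the paper's own proof, which completes the square and simply demands that $c_\jmath^{-1}$ and $f_{\jmath\imath}b_\imath c_\imath^{-1}$ have opposite phases for every $\imath$, leaves exactly the same point unaddressed, so your attempt is no weaker than the published argument.
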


Proposition \ref{prop_high_mmse_2} is proved in Appendix \ref{prop_high_mmse_2p}. With Proposition \ref{prop_high_mmse_2}, we can readily develop an algorithm to iteratively optimize $\{b_\jmath\}$ and $\{c_\jmath\}$. We omit the details for simplicity.

The phase-aligned algorithm in \cite{wms12jsac} is the special case of pairwise traffic pattern assuming the uplink and downlink channel are reciprocal. The phase setting \eqref{47} generalizes the previous result by relaxing the assumptions of pairwise pattern and channel reciprocity.


\subsubsection{Weighted sum rate maximization}

Next, we investigate the high-SNR asymptotically optimal matrices $\bB$ and $\bC$ for the weighted sum rate maximization.
Plugging into the optimal form \eqref{asymp_high} and ignoring the high-order infinitesimals, we rewrite \eqref{max_rate} as
\begin{subequations} \label{pmse2}
\begin{align}
\mathop\text{maximize}_{\{b_\jmath\},\{c_\jmath\}}  &\qquad \frac{1}{2}\sum_{\jmath=1}^{K}  \log \frac{t_j q_j|c_\jmath|^{-2}}{\gamma^2 |c_\jmath|^{-2}\left[ h_{\jmath\jmath} |b_\jmath|^2+2\Re\{h_{\imath\jmath}^*b_\jmath\}+ h_{\imath\imath}\right]
+\sm^2} \label{60a}\\
\text{subject to} &\qquad  \text{Tr}[(\bF\bF^{H})^{-1}\bC^{-1}(\bP+\bB)\bQ(\bP+\bB)^H \bC^{-H}]\le P_r.  \label{60b}
\end{align}
\end{subequations}





Similarly to the case of the weighted sum MSE minimization, the joint optimization of $\bB$ and $\bC$ is difficult to solve. We next find a suboptimal solution by iteratively optimizing $\{b_\jmath\}$ and $\{c_\jmath\}$. Similarly to Proposition \ref{prop_high_mmse_2}, we have the following results, with the proof of which is given in Appendix \ref{prop_high_rate_2p}.


\begin{Proposition}  \label{prop_high_rate_2}
(i) Given $\{c_\jmath\}$, problem \eqref{pmse2} has a closed-form optimal solution $\{b_\jmath\}$. (ii) Given $\{b_\jmath\}$, the optimal phases of $\{c_\jmath\}$ to problem \eqref{pmse2} are given by \eqref{47}.
(iii) Given $\{b_\jmath\}$ and the phases of $\{c_\jmath\}$ in \eqref{47}, the problem \eqref{pmse} is convex in $\{|c_\jmath|^2\}$.
\end{Proposition}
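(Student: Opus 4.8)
The plan is to mirror the proof of Proposition \ref{prop_high_mmse_2}, since problem \eqref{pmse2} is fully separable. First I would record the two structural facts that make this work: (a) after cancelling the common factor $|c_\jmath|^{-2}$ in \eqref{60a}, the $\jmath$-th rate term depends on $b_\jmath$ and on $|c_\jmath|^2$ only, so the objective is a sum of per-index terms; and (b) because $\bP+\bB$ has permutation-plus-diagonal structure and $(\bF\bF^H)^{-1}$ is Hermitian, the left-hand side of the power constraint \eqref{60b} expands into a sum of per-index quadratics $f_{\jmath\jmath}q_\jmath|c_\jmath|^{-2}|b_\jmath|^2+2\Re\{g_\jmath b_\jmath\}$ plus a $b$-independent constant, where $g_\jmath^*=f_{\jmath,\pi(\jmath)}q_\jmath(c_\jmath^*)^{-1}c_{\pi(\jmath)}^{-1}$ and $f_{ij}$ is the $(i,j)$ entry of $(\bF\bF^H)^{-1}$. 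I would then attach a single multiplier $\lb$ to the power budget and work with the Lagrangian, which by (a) and (b) splits into $K$ scalar subproblems once the complementary block of variables is frozen.

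For part (i), fix $\{c_\jmath\}$ and maximize the Lagrangian over each complex $b_\jmath$ separately. Setting the Wirtinger derivative $\partial/\partial b_\jmath^*$ to zero yields $b_\jmath=-(\mu_\jmath h_{\imath\jmath}+\lb g_\jmath)/(\mu_\jmath h_{\jmath\jmath}+\lb f_{\jmath\jmath}q_\jmath|c_\jmath|^{-2})$, where $\mu_\jmath>0$ is (a positive multiple of) the reciprocal of the SINR denominator and thus itself depends on $b_\jmath$. This is exactly where the rate problem departs from the MSE problem: the logarithm makes the stationarity condition implicit, so the answer is not simply a ratio of linear forms as in \eqref{b}, and I expect this to be the main obstacle. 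The resolution I would pursue is to substitute the displayed expression $b_\jmath(\mu_\jmath)$ back into the definition of $\mu_\jmath$; clearing the square of the denominator collapses the condition to a cubic polynomial in the single real unknown $\mu_\jmath$, which has closed-form roots, and I would select the root that globally minimizes the per-index objective (which is coercive and bounded below in $b_\jmath$, so a minimizer exists). The multiplier $\lb$ is then pinned down by enforcing \eqref{60b} with equality.

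For part (ii), I would note from the cancellation in (a) that, for fixed $\{b_\jmath\}$, the objective \eqref{60a} depends on each $c_\jmath$ only through $|c_\jmath|$; hence the phases of $\{c_\jmath\}$ are free and are best used to minimize the left-hand side of the power constraint, which relaxes the budget and can only raise the attainable rate. Since the phase-dependent part of \eqref{60b} is precisely the cross term $2\Re\{\sum_\jmath g_\jmath b_\jmath\}$ appearing in the MSE constraint \eqref{14b}, this is the identical minimization solved in Proposition \ref{prop_high_mmse_2}(ii), so the optimal phases are again given by \eqref{47}. For part (iii), with $\{b_\jmath\}$ and the phases of $\{c_\jmath\}$ fixed, I would change variables to $v_\jmath=|c_\jmath|^2$: each rate term becomes $\tfrac12[\log(t_\jmath q_\jmath)-\log(\gm^2 D_\jmath+\sm^2 v_\jmath)]$ with $D_\jmath=h_{\jmath\jmath}|b_\jmath|^2+2\Re\{h_{\imath\jmath}^*b_\jmath\}+h_{\imath\imath}$, which is concave in $v_\jmath$ because $-\log$ of an affine function is concave; and the constraint \eqref{60b} is convex in $\{v_\jmath\}$ since its diagonal part contributes $v_\jmath^{-1}$ terms and, after the phase alignment \eqref{47}, its cross terms contribute $(v_\jmath v_{\pi(\jmath)})^{-1/2}$ terms, both convex on the positive orthant. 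Maximizing a concave objective over a convex feasible set then gives the claimed convexity.
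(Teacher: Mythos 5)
Your treatment of parts (i) and (ii) follows the paper's route: for fixed $\{c_\jmath\}$ the Lagrangian stationarity condition in each $b_\jmath$ reduces (after eliminating the real SINR-denominator factor you call $\mu_\jmath$) to a cubic with closed-form roots, and the phases of $\{c_\jmath\}$ affect only the power constraint, which is identical to that of \eqref{tpmse}, so \eqref{47} carries over. That is essentially the argument in Appendix F.

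Part (iii), however, contains a genuine error. You parametrize by $v_\jmath=|c_\jmath|^2$, but in these variables neither of your two claims holds. First, $-\log$ of an affine function is \emph{convex}, not concave ($\log$ is concave, so $-\log$ is convex, and composition with an affine map preserves convexity); hence each rate term $\tfrac12[\log(t_\jmath q_\jmath)-\log(\gm^2 D_\jmath+\sm^2 v_\jmath)]$ is convex in $v_\jmath$, and you would be maximizing a convex function. Second, after the phase alignment \eqref{47} the cross terms of the power constraint enter with a \emph{negative} sign, namely $-2q_{\imath}|f_{\jmath\imath}|\,|b_\imath|\,(v_\jmath v_\imath)^{-1/2}$; since $(v_\jmath v_\imath)^{-1/2}$ is convex on the positive orthant, its negative is concave, so the constraint function is a mix of convex and concave pieces and is not obviously convex. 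The paper's proof (consistent with the statement of Proposition \ref{prop_high_mmse_2}(iii)) works in the reciprocal variables $\eta_\jmath=|c_\jmath|^{-2}$, where everything goes through: each rate term becomes $\log(t_\jmath q_\jmath\eta_\jmath)-\log(\gm^2 D_\jmath\eta_\jmath+\sm^2)$, whose second derivative $-\eta_\jmath^{-2}+(\gm^2D_\jmath)^2(\gm^2D_\jmath\eta_\jmath+\sm^2)^{-2}\le 0$ shows concavity, while the constraint becomes terms linear in $\{\eta_\jmath\}$ minus nonnegative multiples of the geometric means $\sqrt{\eta_\jmath\eta_\imath}$, which are concave, so the constraint is convex. (The ``$|c_\jmath|^2$'' in the statement of part (iii) is evidently a typo for ``$|c_\jmath|^{-2}$''; taking it literally is what derails your argument.)
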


\subsection{Further Discussion}

To summarize, our asymptotic analysis  reveals that the proposed iterative algorithms in Section III converge to the asymptotically optimal solution in \eqref{44} at low SNR, and this convergence is insensitive to the initial conditions. At high SNR, Algorithms $1$ and $2$  converge to the asymptotically optimal zero-forcing form in \eqref{asymp_high}, but could perform poorly depending on the initial conditions of $\bB$ and $\bC$. Therefore, in implementation, we set the initial values of Algorithms $1$ and $2$ to the high-SNR optimal/suboptimal solutions of $\bB$ and $\bC$ (cf. Propositions $4-7$).

It is also worth mentioning that all the theorems and propositions presented in this section literally hold for the case in which user power control is allowed. That is, all the results obtained in this section are directly applicable to the asymptotic solutions to problems \eqref{opt_mse_new} and \eqref{max_rate_x_new}. To see this, we only need to show that, in the asymptotic SNR regimes, maximum power transmission is desirable for every user. We first consider the high SNR regime. It is known that zero forcing is asymptotically optimal at high SNR, which implies that the higher the transmission power, the higher the weighted sum rate and the lower the weighted sum MSE (since the users do not interfere with each other). Therefore, provided $N \ge K-1$ (which allows zero-forcing), full power transmission at every user is asymptotically optimal in the high SNR regime.

Now we consider the low SNR regime. In this case, channel noise dominates the inter-user interference, implying that the higher transmission power, the higher achievable rate for each user (and also the lower MSE for each user). Therefore, maximum power transmission is also asymptotically optimal in the low SNR regime.

The asymptotic analysis provided in this section also sheds light on the convergence speed of the proposed iterative algorithms. Specifically, these algorithms converge very fast at high SNR, and eventually are stuck at the initial value when SNR goes to infinity. In the low SNR regime, the proposed algorithms reduce to the power iteration method (used for finding the maximum eigenvalue of a matrix). The convergence speed of power iteration depends on the ratio of the second largest eigenvalue against the largest eigenvalue, which depends on the specific channel realization. Roughly speaking, the proposed algorithms converge relatively slow at low SNR, and the convergence speed increases as SNR increases, as will be demonstrated in the next section.

\section{Numerical Results}

In this section, we evaluate the weighted sum MSE and the weighted sum rate of the proposed MIMO switching schemes. We assume that the maximum transmit
power levels of the relay and the users are the same (thus $Q_i=P_r = 1$, $i=1,\cdots K$), and the users transmit with maximum power. The noise levels at the relay and
at the users are the same (i.e., $\sm^2 = \gm^2$). Then, the transmit SNR is defined as $\mathsf{SNR}= 1/\sm^2 = 1/\gm^2$. The antenna and user settings are $N = K = 4$. We present the numerical results averaged over all permutations (there are $9$ different derangements for $N=4$).
We assume Rayleigh fading, i.e., the elements of $\bH$ and $\bF$ are independently drawn from $\mathcal{CN}(0,1)$. Each simulation point in the presented figures is obtained by averaging over $10^5$ random channel realizations. Note that the results of non-PNC schemes can be found in Appendix G, which will be used for comparison in simulation. 

The key findings are summarized as four observations.

\begin{figure}
\centering
\includegraphics[width=6in]{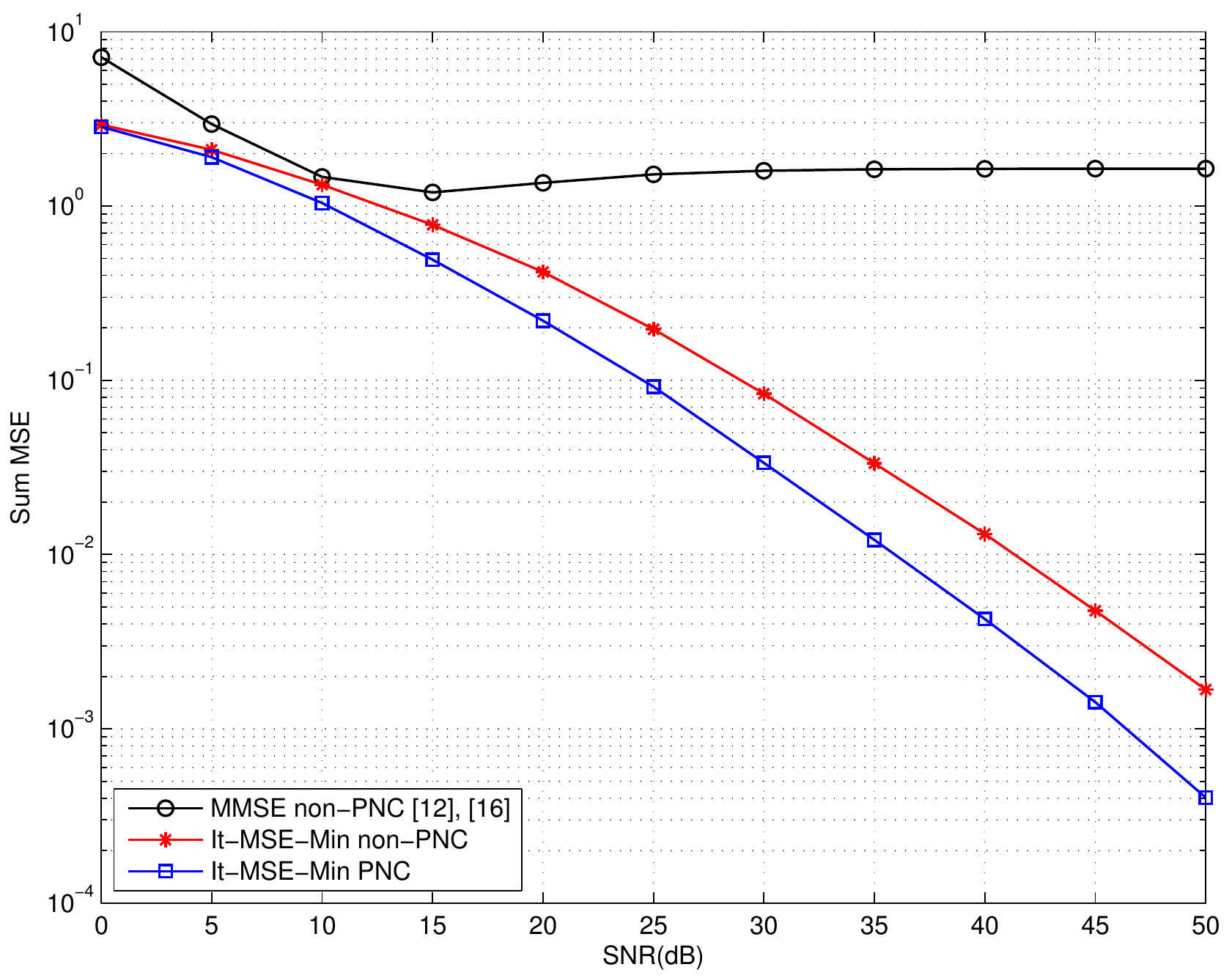}
\caption{The sum MSE performance is evaluated for different schemes when $N=K=4$. The sum MSE achieved by \cite{aut10,Joung10} saturates at high SNR. In contrast, the sum MSE of the proposed iterative schemes, which are labelled by ``It-MSE-Min non-PNC'' and ``It-MSE-Min PNC'', decreses as SNR increases. The one with PNC outperforms the one without PNC.}
\label{mse}
\end{figure}

\medskip\noindent{\bf\emph{Observation 1}}: {\em The proposed iterative sum MSE minimization (It-MSE-Min) algorithm  performs significantly better than the MMSE scheme in \cite{aut10,Joung10}.}

The MSE performance of the related schemes is illustrated in Fig.~\ref{mse}. The sum MSE of the MMSE scheme \cite{aut10,Joung10} saturates at high SNR (say, $\mathsf{SNR}>10$ dB), since this scheme optimizes the precoder at the relay only. In contrast, the proposed It-MSE-Min algorithm jointly optimizes both the precoder at the relay and the receive filters at the users, resulting in vanishing MSE at high SNR. Moreover, the sum MSE can be further reduced by exploiting the PNC technique. From Fig.~\ref{mse}, the PNC gain
is as significant as $6$ dB at the sum MSE of $10^{-2}$.

\begin{figure}
\centering
\includegraphics[width=6in]{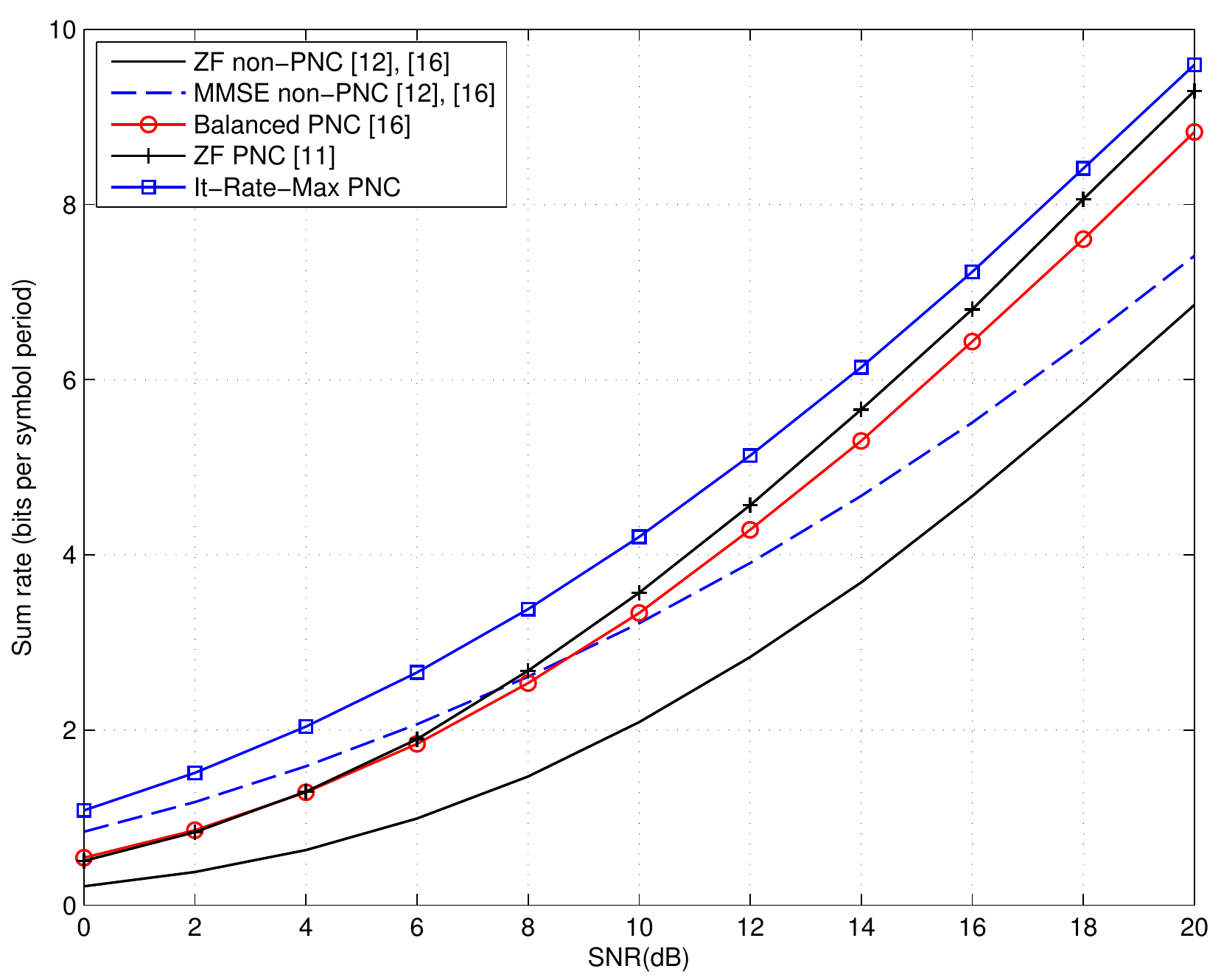}
\caption{The sum rate performance is evaluated for different schemes when $N=K=4$. The proposed iterative algorithm for sum rate maximization with PNC, i.e., ``It-Rate-Max PNC'', outperforms other schemes in the literature. ``ZF non-PNC'' and ``MMSE non-PNC'' were proposed in \cite{aut10,Joung10}; ``Balanced PNC'' was proposed in \cite{aut10} only for pairwise data exchange; ``ZF PNC'' was proposed in \cite{wms12jsac}.}
\label{rate}
\end{figure}

\medskip\noindent{\bf\emph{Observation 2}}: {\em The iterative sum rate maximization (It-Rate-Max) algorithm achieves significant throughput gains over the existing relaying schemes, such as ZF/MMSE relaying \cite{aut10,Joung10} and the network-coded relaying \cite{aut10,ey10}.}

\begin{figure}
\centering
\includegraphics[width=6in]{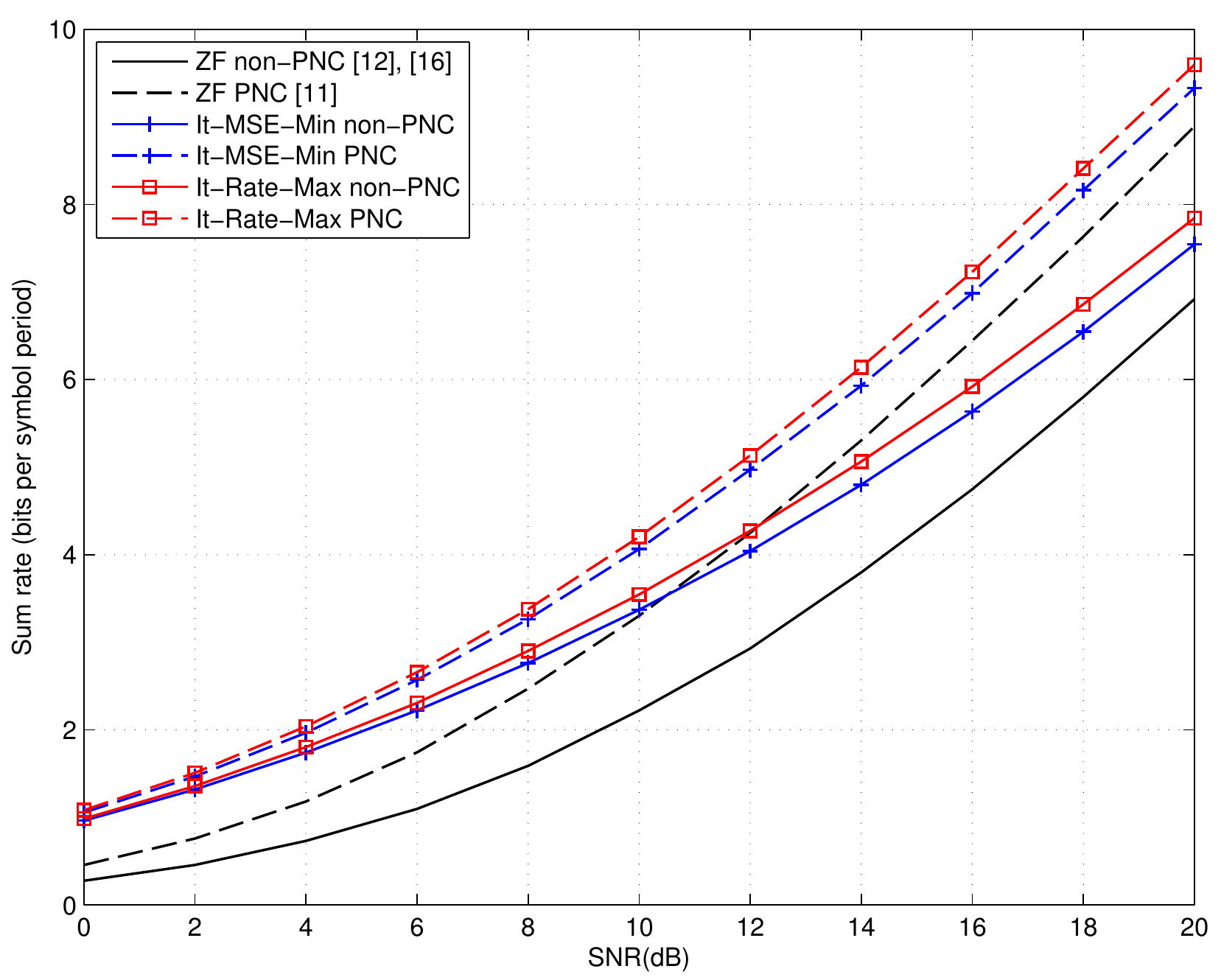}
\caption{The sum rate performance is evaluated for corresponding non-PNC and PNC schemes when $N=K=4$. In general, the schemes with PNC outperform the ones without PNC.
``ZF non-PNC'' was proposed in \cite{aut10,Joung10}; ``ZF PNC'' was proposed in \cite{wms12jsac}; ``It-MSE-Min'' and ``It-Rate-Max'' are proposed in Section III and IV, respectively, where their non-PNC schemes are derived in Appendix \ref{prop_non_pnc}.}
\label{nPNC_PNC}
\end{figure}

Fig.~\ref{rate} illustrates the throughput performance of various approaches including the proposed iterative sum rate maximization scheme with PNC (It-Rate-Max PNC), the zero-forcing scheme without PNC (ZF non-PNC) in \cite{aut10,Joung10}, the MMSE scheme without PNC (MMSE non-PNC) in \cite{aut10,Joung10}, the balanced PNC scheme  proposed in \cite{aut10},\footnote{A PNC scheme was proposed in \cite{ey10} as well, which used the same block-diagonalization technique as that in \cite{aut10}. However, the scheme in \cite{aut10} induced an extra step to balance the channel gain of each user, which outperformed the scheme in \cite{ey10}. Thus, we show the result of \cite{aut10} in Fig.~\ref{rate} only. Note that the proposed schemes in \cite{ey10} and \cite{aut10} are both for pairwise transmission only. Thus, the red curve in Fig.~\ref{rate} is calculated by averaging the results for symmetric derangements only.} and the zero-forcing scheme with PNC (ZF PNC) in \cite{wms12jsac}. From Fig.~\ref{rate}, the proposed It-Rate-Max PNC algorithm significantly outperforms the other schemes throughout the SNR range of interest. Specifically, the proposed algorithm outperforms the MMSE non-PNC scheme, especially in the high SNR regime, since the former utilizes the PNC technique and jointly optimizes the precoder and the receive filter. The proposed algorithm also outperforms the zero-forcing schemes in \cite{aut10,Joung10,wms12jsac}, since the latters suffer from noise enhancement. Furthermore, we also see that the proposed iterative rate-max scheme has roughly $1.5$ dB gain over the balanced PNC scheme in \cite{aut10} throughout the whole SNR range of interest.


\medskip\noindent{\bf\emph{Observation 3}}: {\em The PNC schemes achieve considerably higher throughputs than their corresponding non-PNC schemes, especially at medium and high SNR.}

Fig.~\ref{nPNC_PNC} illustrates the PNC gain for the proposed It-MSE-Min/It-Rate-Max approaches, as well as for the zero-forcing relaying schemes in \cite{aut10,Joung10,wms12jsac}. At low SNR, the proposed It-MSE-Min/It-Rate-Max algorithms with and without PNC, exhibit roughly the same throughput performance, which numerically verifies Theorem \ref{theorem_low}. At high SNR, the proposed schemes with PNC achieve about $6$ dB gain over the best non-PNC schemes (i.e., the It-Rate-Max scheme without PNC) at the sum rate of $8$ bits per symbol period.
The proposed It-MSE-Min/It-Rate-Max algorithms exhibit similar throughput performance at high SNR, either in the PNC case or the non-PNC case. This agrees well with the fact in Theorems $1$ and $2$ that the optimal precoders for It-MSE-Min/It-Rate-Max coincide, except that the asymptotically optimal diagonal matrices $\bB$ and $\bC$ are slightly different (cf. Proposition \ref{prop_high_converge}$-$\ref{prop_high_rate_2}).


\begin{figure}
\centering
\includegraphics[width=6in]{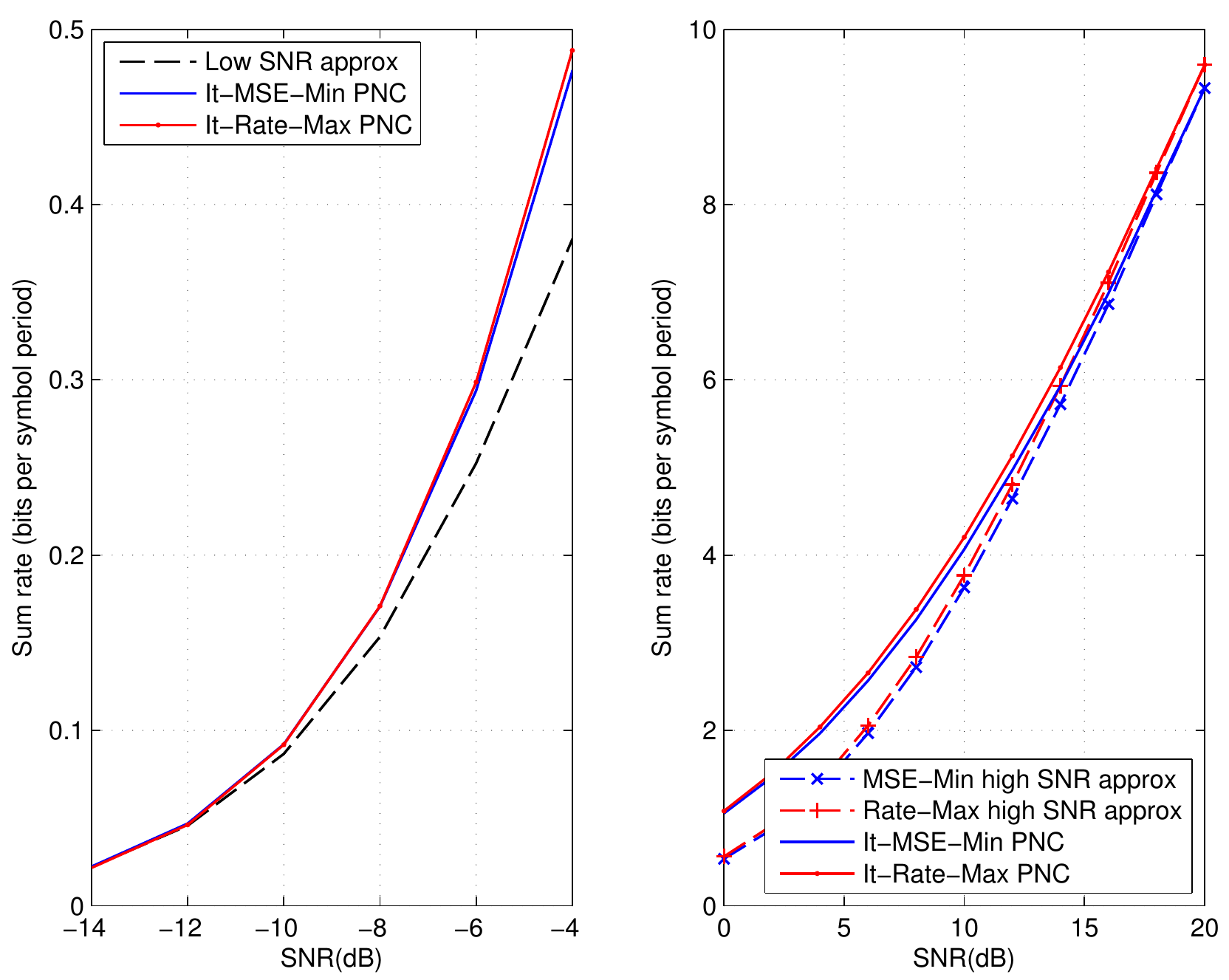}
\caption{The sum rate performance is evaluated for different asymptotically optimal solutions when $N=K=4$. ``Low SNR approx'' is the low-SNR asymptotically optimal sollution given by Theorem $1$; ``MSE-Min high SNR approx'' and ``Rate-Max high SNR approx'' are the high-SNR asymptotically optimal solutions given by Theorem $2$ as well as Proposition $6$ and $7$, respectively. The proposed iterative schemes achieve nearly optimal performance at both low and high SNR. In the medium SNR regime, it shows that the proposed schemes also achieve good performance.}
\label{new}
\end{figure}

\begin{figure}
\centering
\includegraphics[width=6in]{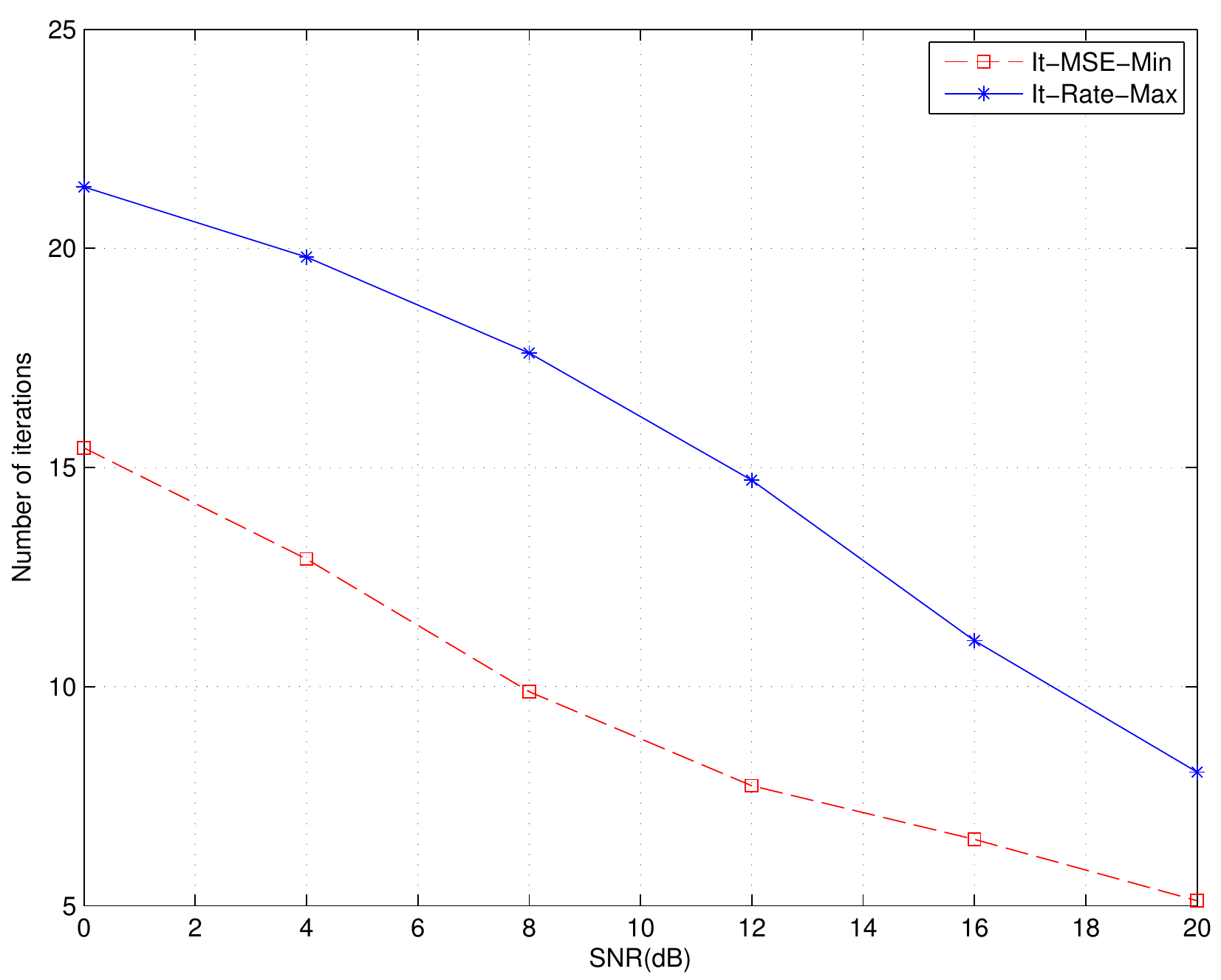}
\caption{The numbers of iterations of the two proposed iterative algorithms
are evaluated for different schemes when $N=K=4$. The average number of iterations decreases as SNR increases. The algorithm for the sum rate optimization problem (i.e., It-Rate-Max) needs more iterations than that for the sum MSE optimization problem (i.e., It-MSE-Min). }
\label{nb_iter}
\end{figure}

\medskip\noindent{\bf\emph{Observation 4}}: {\em With the high-SNR asymptotically optimal solutions as the initials, the proposed iterative algorithms achieve good performance over all SNR regimes.}

In Fig.~\ref{new}, the proposed iterative algorithms are initialized by their high-SNR asymptotically optimal solutions, respectively. It shows that both the It-MSE-Min scheme and the It-Rate-Max scheme asymptotically approaches optimal as SNR decreases and increases. In addition, the high-SNR asymptotically optimal solutions also achieve good performance especially at high SNR, while the low-SNR solution degrades fast as SNR increases. Thus, regarding both the throughput and the computational complexity, the high-SNR asymptotically optimal solutions  are  practical schemes in the engineering view.

In addition, we evaluate the convergence speed of the proposed iterative algorithms.
Our numerical results indicate that the proposed algorithms converge quite fast. For example, in Fig.~\ref{nb_iter}, less than $25$ iterations are required to achieve convergence on average. The average number of iterations decreases as SNR increases. The weighted sum rate optimization problem needs more iterations than the weighted sum MSE optimization problem.

%
%
%

\section{Conclusion}

In this paper, we have proposed a unified approach to iteratively solve the weighted sum MSE minimization and the weighted sum rate maximization problems for the wireless MIMO switching networks with and without PNC. 
We proved that, although the proposed algorithms are suboptimal in general, they can converge to asymptotically optimal solution in the low SNR regime regardless of the initial conditions, and near optimal solution in the high SNR regime with properly setting initial conditions. Numerical results show that the proposed iterative algorithms significantly outperform the existing ZF and MMSE relaying schemes for MIMO switching for all SNR.

This paper makes several assumptions to simplify  the design and analysis of the MIMO switching schemes. For example, we assume that each user has a single antenna, and full channel state information is available to the relay. It is of theoretical and practical interest to investigate the impact of relaxing these assumptions on the MIMO switching design.

\appendices

\section{Proof of Theorem \ref{theorem_low}}   \label{theorem_lowp}

We first consider solving the weighted sum MSE minimization in \eqref{opt_mse}. The global optimal solution of \eqref{opt_mse} should satisfy the KKT conditions, i.e. the results in Proposition \ref{prop_mmse_1} and Proposition \ref{prop_mmse_2}. From Proposition \ref{prop_mmse_1}, the optimal $\al$ can be expressed in terms of $\bar\bG$ as in \eqref{al}. When $\sm^2,\gm^2\rightarrow +\infty$, we obtain
\be \label{al_low}
\al^2= \frac{P_r}{\gm^2 \tr[\bar\bG\bar\bG^H]}+\mathcal{O}(\gm^{-2}).
\ee
Similarly, from Proposition \ref{prop_mmse_2} the optimal $\bB$ and $\bar\bC$ can be expressed in terms of $\bar\bG$ and $\al$ as in \eqref{bc}. As $\sigma^2,\gamma^2 \rightarrow +\infty$, we obtain
\bseq  \label{52}
\be
\bB^{opt} &=& \al^2\sm^{-2}[\bF\bar\bG\bH]_{\text{diag}}[\bP\bQ\bH^H\bar\bG^H\bF^H]_{\rm diag}+\mathcal{O}(\sm^{-2}\gm^{-2}), \label{b_low}\\
\bar\bC^{opt} &=& \al^2\sm^{-2}[\bP\bQ\bH^H\bar\bG^H\bF^H]_{\rm diag}+\mathcal{O}(\sm^{-2}\gm^{-2}). \label{c_low}
\ee
\eseq
Then, the weighted sum MSE \eqref{opt_mse_x1a} is rewritten as
\be
&&\mathcal{J}(\bar\bG,\al,\bB,\bar\bC) \notag \\
&\mathop=\limits^{}&\tr\left[\bW\bP\bQ\bP^T-2\Re\left\{\al^2\sm^{-2} \bW[\bP\bQ\bH^H\bar\bG^H\bF^H]_{\rm diag}\bF\bar\bG\bH\bQ\bP^T \right\}\right.\notag\\
 &&+ \gm^2\al^4\sm^{-4}\bW[\bP\bQ\bH^H\bar\bG^H\bF^H]_{\rm diag}\bF\bar\bG\bar\bG^H\bF^H[\bP\bQ\bH^H\bar\bG^H\bF^H]_{\rm diag}\notag\\
  &&\left.+ \al^{2}\sm^{-2}\bW[\bP\bQ\bH^H\bar\bG^H\bF^H]_{\rm diag}[\bP\bQ\bH^H\bar\bG^H\bF^H]_{\rm diag} \right]+\mathcal{O}(\sm^{-4}\gm^{-4})\label{54a}\\
&\mathop=\limits^{}& \tr[\bW\bP\bQ\bP^T]+\tr\left[-2\Re\left\{\al^2\sm^{-2}\bW [\bP\bQ\bH^H\bar\bG^H\bF^H]_{\rm diag}\bF\bar\bG\bH\bQ\bP^T \right\}\right.\notag\\
  &&\left.+ \al^{2}\sm^{-2}\bW[\bP\bQ\bH^H\bar\bG^H\bF^H]_{\rm diag}[\bF\bar\bG\bH\bQ\bP^T]_{\rm diag} \right]+\mathcal{O}(\sm^{-4}\gm^{-2})\label{54b}\\
&\mathop=\limits^{}&\tr[\bW\bP\bQ\bP^T]- \tr\left[\al^{2}\sm^{-2}\bW[\bP\bQ\bH^H\bar\bG^H\bF^H]_{\rm diag}[\bF\bar\bG\bH\bQ\bP^T]_{\rm diag}\right] \notag\\
&&+\mathcal{O}(\sm^{-4}\gm^{-2})\\
&\mathop=\limits^{}&\tr[\bW\bP\bQ\bP^T]- \tr\left[\al^{2}\sm^{-2}\bW[\bQ\bH^H\bar\bG^H\bF^H\bP]_{\rm diag}[\bP^T\bF\bar\bG\bH\bQ]_{\rm diag}\right] \notag\\
&&+\mathcal{O}(\sm^{-4}\gm^{-2})\\
&\mathop=\limits^{}&\tr[\bW\bP\bQ\bP^T]- \sm^{-2}\sum_{\ell=1}^K q_\ell^2 w_\ell|\bp_\ell^T\bF\bG\bh_\ell|^2 +\mathcal{O}(\sm^{-4}\gm^{-2}), \label{75}
\ee
where \eqref{54a} follows by plugging in \eqref{b_low} and \eqref{c_low}; \eqref{54b} follows from \eqref{al_low}; \eqref{75} follows from \eqref{al_mmse}.

Now we consider the weighted sum rate in \eqref{sum_rate}. In the low SNR regime, this weighted sum rate becomes first order approximated as
\be \label{76}
\sum_{\ell=1}^{K}w_\ell R_\ell&= & \frac{1}{2\sigma^2}\sum_{\ell=1}^{K} q_\ell^2 w_\ell|\bp_\ell^T\bF\bG\bh_\ell|^2+\mathcal{O}(\sm^{-2}) .
\ee
From \eqref{75} and \eqref{76}, we see that minimizing the weighted sum MSE is equivalent to maximizing the weighted sum rate in the low-SNR regime, with the solution given by solving
\begin{subequations} \label{opt_low_converge}
\begin{align}
\mathop\text{maximize}_{\bG}  &\qquad \sum_{\ell=1}^{K} q_\ell^2 w_\ell|\bp_\ell^T\bF\bG\bh_\ell|^2 \label{opt_low_converge_a}\\
\text{subject to} &\qquad  \tr[\bG\bG^H] \leq \frac{P_r}{\gm^2}. \label{opt_low_converge_b}
\end{align}
\end{subequations}
Note that
\be \label{sum_rate_low}
|\bp_\ell^T\bF\bG\bh_\ell|^2&=& \tr[\bp_\ell^T\bF\bG\bh_\ell\bh_\ell^H\bG^H\bF^H\bp_\ell]\\
&=&\tr[\bG^H\bF^H\bp_\ell\bp_\ell^T\bF\bG\bh_\ell\bh_\ell^H]\\
&=&\bg^H vec(\bF^H\bp_\ell\bp_\ell^T\bF\bG\bh_\ell\bh_\ell^H)\\
&\mathop=\limits^{}& \bg^H\left((\bh_\ell\bh_\ell^H)\otimes(\bF^H\bp_\ell\bp_\ell^T\bF)\right)\bg,
\ee
where $\bg=vec(\bG)$, and the last step follows from $vec(\bA\bB\bC)=(\bC^T\otimes\bA)vec(\bB)$ in Lemma $4.3.1$ \cite{roger90}. With the definition in \eqref{psi}, problem \eqref{opt_low_converge} can be equivalently written as
\begin{subequations}  \label{opt_mse_low}
\begin{align}
\mathop \text{maximize}_{\boldsymbol{g}}  &\qquad \bg^H\bPsi\bg\\
\text{subject to} &\qquad  \bg^H\bg\leq\frac{P_r}{\gm^2}. \label{83b}
\end{align}
\end{subequations}
The optimal $\bg$ to the above problem is an eigenvector corresponding to the maximum eigenvalue of $\bPsi$, which concludes the proof.

\section{Proof of Proposition \ref{prop_low_converge}}  \label{prop_low_converge_p}

We first consider the convergence point of Algorithm $1$. Clearly, any convergence point of Algorithm $1$ satisfies both \eqref{10} and \eqref{bc}. In the low SNR regime, \eqref{bc} can be first-order approximated as \eqref{52}. Plugging in  \eqref{52},  \eqref{tGopt_mmse} can be first-order approximated as
\bseq
\be
\bar\bG &\mathop\approx\limits^{}& \gm^{-2}\left(\frac{\sm^2}{P_r}\tr[\bW\bar\bC\bar\bC^H]\right)^{-1}\bF^H \bar\bC^H \bW\bP\bQ\bH^H  \label{59a}\\
&\mathop=\limits^{}&  \frac{\tr[\bar\bG\bar\bG^H]}{\|\bW^{\frac12}\diag\{\bP\bQ\bH^H\bar\bG^H\bF^H\}\|_F^2} \bF^H [\bF\bar\bG\bH\bQ\bP^T]_{\rm diag} \bW\bP\bQ\bH^H \label{59b}\\
&\mathop=\limits^{}&   \xi \bF^H [\bF\bar\bG\bH\bQ\bP^T]_{\rm diag}\bW \bP\bQ\bH^H \label{59c}\\
&\mathop=\limits^{}&   \xi \bF^H\bP [\bP^T\bF\bar\bG\bH\bQ]_{\rm diag} \hat\bW\bQ\bH^H, \label{90}
\ee
\eseq
where $\hat\bW=\bP^T\bW\bP=\diag\{w_1,\cdots,w_K\}$.
In the above, \eqref{59a} follows by noting that as $\sm^2,\gm^2\rightarrow +\infty$,
\bseq
\be
\frac{\sm^2}{P_r} \tr[\bW\bar\bC\bar\bC^H]\bI + \bF^H\bar\bC^H\bW\bar\bC\bF&\approx & \frac{\sm^2}{P_r} \tr[\bar\bC\bar\bC^H]\bI,\\
\bP+\bB&\approx &\bP,\\
\bH\bH^H+\gm^2\bI&\approx &\gm^2\bI.
\ee
\eseq
\eqref{59b} follows by plugging in \eqref{al_low} and \eqref{c_low}. \eqref{59c} follows by defining
\be
\xi=\frac{\tr[\bar\bG\bar\bG^H]}{\|\bW^{\frac12}\diag\{\bP\bQ\bH^H\bar\bG^H\bF^H\}\|^2}.
\ee
\eqref{90} follows from $\bP^T[\bA]_{\rm diag}\bP=[\bP^T\bA\bP]_{\rm diag}$.
Let $\bff_\ell$ to be column $\ell$ of $\bF^H\bP$. Recall that $\bh_\ell$ is column $\ell$ of $\bH$, $\ell=1,\cdots,K$. We obtain
\be
&&\bF^H \bP[\bP^T\bF\bar\bG\bH\bQ]_{\rm diag} \hat\bW\bQ\bH^H \notag \\
&=& \begin{bmatrix}
\bff_1 & \cdots & \bff_K
\end{bmatrix}\diag\{q_1^2w_1\bff_1^H\bar\bG\bh_1,\cdots,q_K^2w_K\bff_K^H\bar\bG\bh_K\}
\begin{bmatrix}
\bh_1 & \cdots & \bh_K
\end{bmatrix}^H \ \ \ \\
&=& \sum_{\ell=1}^K q_\ell^2 w_\ell (\bff_\ell\bff_\ell^H) \bar\bG(\bh_\ell \bh_\ell^H) \\
&=& \sum_{\ell=1}^K q_\ell^2 w_\ell (\bF^H\bp_\ell\bp_\ell^T\bF) \bar\bG(\bh_\ell \bh_\ell^H)\\
&\mathop=\limits^{}& \bPsi\bar\bg. \label{94}
\ee
where the last step follows from $vec(\bA\bB\bC)=(\bC^T\otimes\bA)vec(\bB)$ and the definition in \eqref{psi}. With \eqref{94}, the optimal precoder in \eqref{90} is rewritten as
\be  \label{61}
\bar\bg&=&\xi\bPsi\bar\bg,
\ee
implying that the convergence point of Algorithm $1$ must be one of the eigenvectors of $\bPsi$. Furthermore, the iterative process of Algorithm $1$ is equivalent to recursively calculating $\bPsi\bar\bg$ and then updating $\bar\bg$ with $\bPsi\bar\bg$ normalized by $\xi$.
This process is called {\em power iteration} in linear algebra \cite{matrix}, with the fixed point given by the eigenvector corresponding to the largest eigenvalue of $\bPsi$. Thus, Algorithm $1$ converges to the optimal solution in \eqref{44}.

Next, we consider the convergence point of Algorithm $2$. We aim to show that this convergence point is also given by \eqref{61}. To this end, we note that the similarity between the weighted sum rate maximization problem and the weighted sum MSE minimization problem
is established based on \eqref{wbc} as pointed out in Section IV.B. We also note that the convergence point of the weighted sum MSE minimization problem is given by \eqref{61}. Therefore, it suffices to show that, in the low SNR regime, the weighted sum rate maximization is equivalent to the weighted sum MSE minimization. From \eqref{wbc}, we thus need to show that $\bW=\bT\bSm^{-1}\approx \bT$, i.e., $\bSm\approx \bI$, in the low SNR regime, as detailed below.

When $\sm^2,\gm^2\rightarrow +\infty$, the scaling factor $\al$ can be approximated as in \eqref{al_low}, and hence the covariance \eqref{crr} can be approximated as
\be  \label{96}
\mathcal{C}_{z_iz_i}\approx\sm^2.
\ee
Together with \eqref{crx} and \eqref{al_low}, we have
\be
\Sm_i &\approx& 1- \al^2 (\bp_i^T\bF\bar\bG\bh_i)^* \sm^{-2} \al^2 \bp_i^T\bF\bar\bG\bh_i\approx 1,\quad i=1,\cdots,K.
\ee
or equivalently,
\be
\bSm &\approx& \bI, \label{100x}
\ee
which completes the proof.

\section{Proof of Theorem \ref{theorem_high}}  \label{theorem_highp}

Recall that $\bF$ is a $K$-by-$N$ matrix. Then
\be
\left(\bF^H\bar\bC^H\bW\bar\bC\bF\right)^{\dag} = \bF^{\dag}\bar\bC^{-1} \bW^{-1}(\bar\bC^H)^{-1} (\bF^H)^{\dag}.
\ee
From Proposition \ref{prop_mmse_1}, the optimal $\bar\bG$ for problem \eqref{opt_mse} can be expressed in terms of $\bB$ and $\bar\bC$ as in \eqref{tGopt_mmse}. When $\sm^2,\gm^2\rightarrow 0$, we obtain
\be  \label{68x}
\bG^{opt} &=& \al \bar\bG^{opt}\\
&=&\al \bF^{\dag} \bar\bC^{-1} (\bP+\bB) \bH^{\dag}\\
&=& \bF^{\dag} \bC^{-1} (\bP+\bB) \bH^{\dag},
\ee
which completes the proof for the weighted sum MSE minimization part.

Now we consider the weighted sum rate maximization. From \eqref{Gnopt_rate} and the discussion therein, the
optimal precoder for the weighted sum rate maximization can be expressed in terms of $\bar\bDt$ and $\bSm$ as in \eqref{g2}. Letting $\sm^2,\gm^2\rightarrow 0$ in \eqref{Gnopt_rate}, we obtain
\be
\bG^{opt} = \al^{} \bF^{\dag}\bar\bOm^{-1}(\bP+\bar\bOm\bar\bDt)\bH^{\dag}.
\ee
Recall that both $\bar\bDt$ and $\bSm$ are diagonal. Thus, the theorem is proved by variable substitutions of letting $\bB=\bar\bOm\bar\bDt$ and $\bC=\al^{-1}\bar\bOm$.

\section{Proof of Proposition \ref{prop_high_converge}} \label{prop_high_converge_p}

We first see the high-SNR approximation of the solution \eqref{tGopt_mmse}, \eqref{b} and \eqref{c} for the weighted sum MSE minimization problem. When $\sm^2,\gm^2\rightarrow 0$, from \eqref{tGopt_mmse}, the optimal solution of $\bG$ can be rewritten as
\be \label{xG}
\bG^{opt}=\bF^{\dag}\bC^{-1}(\bP+\bB)\bH^{\dag}.
\ee
Plugging \eqref{xG} into \eqref{b} and \eqref{c}, we have
\be
\bB^{opt}=\bB\quad \text{and}\quad \bC^{opt}=\bC.
\ee
This result indicates that the solution of Algorithm $1$ is trapped at the initial value at high SNR.

Next, we see the high-SNR approximation of the solution \eqref{Gnopt_rate} and \eqref{Smi}. Plugging \eqref{asymp_high} into  \eqref{Smi}, we have
\be
\om_i^{opt}=\om_i\quad \text{and}\quad \Sm_i^{opt}\rightarrow 0,\quad i=1,\cdots,K.
\ee
Note that the term $\frac{\sm^2}{P_r}\tr[\bT\bSm^{-1}\bar\bOm \bar\bOm^H]$ in \eqref{Gnopt_rate} is still finite due to $\sm^2\rightarrow 0$. This result indicates that the solution of Algorithm $2$ is also trapped at the initial condition. Thus, the proof is completed.



\section{Proof of Proposition \ref{prop_high_mmse_2}}  \label{prop_high_mmse_2p}

We start with the proof of part (i). Problem \eqref{pmse} can be equivalently expressed as
\begin{subequations} \label{tpmse}
\begin{align}
\mathop\text{minimize}_{\mathclap{{b_1,\cdots,b_K},{c_1,\cdots,c_K}}}  &\qquad \sum_{\imath=1}^{K} w_\imath \left(\gm^2h_{\jmath\jmath}|b_\jmath|^2+2\gm^2\Re\{h_{\imath\jmath}^*b_\jmath\}+\sm^2|c_\jmath|^2+\gm^2 h_{\imath\imath}\right)  \label{112a}\\
\text{subject to} &\qquad  \sum_{\imath=1}^{K} \left( q_{\pi^{-1}(\imath)}f_{\imath\imath}|c_\imath|^{-2}+q_{\imath}f_{\imath\imath}|c_\imath|^{-2} |b_\imath|^2+2\Re\{q_{\imath}f_{\jmath\imath}(c_\jmath^*)^{-1}c^{-1}_\imath b_\imath\}\right)\le P_r. \label{112b}
\end{align}
\end{subequations}
For fixed $\{c_\jmath\}$, both the objective function \eqref{112a} and the constraint \eqref{112b} are quadratic functions of $\{b_\jmath\}$. Hence it is straightforward to see that problem \eqref{tpmse} (or equivalently \eqref{pmse}) is convex in $\{b_\jmath\}$. By the Lagrangian method, The optimal $\{b_\jmath\}$ can be  obtained as
\be
b_\jmath = -\frac{\gm^2 w_\jmath h_{\imath\jmath} + \lb q_{\jmath} f_{\jmath,\pi(\jmath)}(c_\jmath^*)^{-1}c^{-1}_{\pi(\jmath)}}{\gm^2w_{\jmath} h_{\jmath\jmath} + \lb q_{\jmath} f_{\jmath\jmath}|c_\jmath|^{-2} },\quad \jmath = 1,\cdots,K,
\ee
where $\lb$ is a scalar to meet the relay power constraint.

Next, we prove part (ii) and (iii). Since the weighted sum MSE \eqref{112a} is independent of the phases of $\{c_\jmath\}$, the optimal phases must minimize the relay power consumption in \eqref{112b}. Note that \eqref{112b} can be rewritten as
\bal  \label{100}
\sum_{\imath=1}^{K}\left(q_{\imath}|c^{-1}_\jmath+f_{\jmath\imath}b_\imath c^{-1}_\imath|^2 +q_{\pi^{-1}(\imath)}f_{\imath\imath}|c_\imath|^{-2} + q_{\imath}
(f_{\imath\imath}|b_\imath|^2 |c_\imath|^{-2}-|f_{\jmath\imath}b_\imath|^2 |c_\imath|^{-2}-|c_\jmath|^{-2})\right)\le P_r.
\end{align}
Therefore, for any given $\{b_\imath\}$, $c_\jmath^{-1}$ and $f_{\jmath\imath}b_\imath c^{-1}_\imath$ have opposite phases, or equivalently, the optimal phases of $\{c_\jmath\}$ satisfy
\bal   \label{phase_c}
\angle c_\jmath=&\angle c_\imath-\angle f_{\jmath\imath}-\angle b_{\imath}-\pi,\quad \jmath=1,\cdots,K.
\end{align}
With \eqref{phase_c}, the relay power consumption in \eqref{100} reduces to
\be  \label{82}
\sum_{\imath=1}^{K} \left(q_{\pi^{-1}(\imath)} f_{\imath\imath}|c_\imath|^{-2}+q_{\imath} f_{\imath\imath}|c_\imath|^{-2} |b_\imath|^2-2 q_{\imath}|f_{\jmath\imath}||c_\jmath|^{-1}|c_\imath|^{-1} |b_\imath|\right).
\ee
Let $\eta_\imath=|c_\imath|^{-2}$, $\imath=1,\cdots,K$. Clearly, the objective \eqref{112a} is convex in $\{\eta_\imath\}$. In addition, $|c_{\jmath}|^{-1}|c_\imath|^{-1}=\sqrt{\eta_\jmath \eta_\imath}$, the geometric mean of $\eta_\jmath$ and $\eta_\imath$, is concave in $\eta_\jmath$ and $\eta_\imath$ \cite{boyd}. As non-negative weighted sums preserve concavity, $\sum_{\imath=1}^{K}\sqrt{\eta_\jmath \eta_\imath}$ is concave in $\{\eta_\imath\}$, and hence the power in \eqref{82} is convex in $\{\eta_\imath\}$. Therefore, the problem \eqref{tpmse} is a convex problem of $\{\eta_\imath\}$, which concludes the proof.



\section{Proof of Proposition \ref{prop_high_rate_2}}  \label{prop_high_rate_2p}

The proof of Proposition \ref{prop_high_rate_2} is similar to that of Proposition \ref{prop_high_mmse_2}. The sketch of the proof is given as follows.
We start with the proof of part (i). For fixed $\{c_\jmath\}$, we  use the Lagrangian method to solve the problem \eqref{pmse2}. The Lagrangian function is written as
\bal  \label{85}
\mathcal{L}(\{b_\jmath\},\lb)=&\frac{1}{2}\sum_{\jmath=1}^{K}  \log \frac{t_j q_j|c_\jmath|^{-2}}{\gamma^2 |c_\jmath|^{-2}\left[ h_{\jmath\jmath} |b_\jmath|^2+2\Re\{h_{\imath\jmath}^*b_\jmath\}+ h_{\imath\imath}\right]
+\sm^2} \notag\\
& + \lb\left(\sum_{\imath=1}^{K} (q_{\pi^{-1}(\imath)}f_{\imath\imath}|c_\imath|^{-2}+q_{\imath}f_{\imath\imath}|c_\imath|^{-2} |b_\imath|^2+2\Re\{q_{\imath}f_{\jmath\imath}(c_\jmath^*)^{-1}c^{-1}_\imath b_\imath\})- P_r\right)\ \
\end{align}
Setting the derivatives w.r.t. $\{b_\imath\}$ to zero, we obtain
\bal
(\gamma^2  h_{\i\i} |b_\i|^2+2\gamma^2\Re\{h_{\pi^{-1}(\imath),\i}^*b_\i\}&+ \gamma^2h_{\pi^{-1}(\imath),\pi^{-1}(\imath)}+\sm^2 |c_\jmath|^{2})  \notag \\
\times (\gm^2 h_{\imath\imath}b_{\imath}+\gm^2 h_{\pi^{-1}(\imath),\imath}) 
& + \lb q_{\imath}\left(f_{\imath\imath}|c_\imath|^{-2} b_\imath +f_{\imath\jmath}(c_\imath^*)^{-1}c^{-1}_\jmath \right)=0.
\end{align}
The above is a cubic equation w.r.t. $b_\jmath$, which allows closed-form solution \cite{cubic}.

We next prove part (ii) and (iii). Since the sum rate in \eqref{60a} is independent of the phases of $\{c_\jmath\}$ and problem \eqref{pmse2} has the same constraint as  problem \eqref{tpmse}, we conclude that the optimal phases of $\{c_\jmath\}$ also satisfy \eqref{phase_c}. Moreover, let $\eta_\imath=|c_\imath|^{-2}$, $\imath=1,\cdots,K$. It can be shown that the constraint \eqref{60b} is convex in $\{\eta_\imath\}$, and that the objective \eqref{60a} is concave in $\{\eta_\imath\}$. Thus, problem \eqref{pmse2} is a convex problem in $\{\eta_\imath\}$, which completes the proof.

\section{Results for Non-PNC Schemes} \label{prop_non_pnc}

In this case, we have $\bB={\bf0}$. Then the optimal precoders are given in the following propositions with the proof, respectively.

\begin{Proposition}  \label{prop_high_mmse_1}
For the non-PNC case, i.e., $\bB={\bf0}$, the optimal solution to \eqref{pmse} is given by $\bC^\infty$ with the $i$th diagonal element being
\bal
c_j^\infty=&\left(\frac{\sqrt{q_{i}  f_{jj}}\sum_{\ell=1}^K \sqrt{w_\ell q_{\pi^{-1}(\ell)} f_{\ell\ell} }}{Pr\sqrt{w_j}}\right)^{\frac{1}{2}},\quad j = 1,\cdots,K. \label{cin}
\end{align}
\end{Proposition}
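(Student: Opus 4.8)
The plan is to specialize Problem \eqref{pmse} to the non-PNC case $\bB={\bf 0}$ and reduce it to a scalar convex program that can be solved in closed form. First I would substitute $\bB={\bf 0}$ into the objective \eqref{14a} and the constraint \eqref{14b}. In \eqref{14a} the term $\gm^2\tr[\bW\bP(\bH^H\bH)^{-1}\bP^T]$ no longer depends on $\bC$ and is therefore an additive constant, leaving only $\sm^2\tr[\bW\bC\bC^H]=\sm^2\sum_{\jmath=1}^K w_\jmath|c_\jmath|^2$ to be minimized. For the constraint, I would use that $\bp_i=\bee_{\pi(i)}$ makes $\bP\bQ\bP^T$ diagonal with $(\ell,\ell)$ entry $q_{\pi^{-1}(\ell)}$, so that $\bC^{-1}\bP\bQ\bP^T\bC^{-H}$ is diagonal with entries $q_{\pi^{-1}(\ell)}|c_\ell|^{-2}$; pairing this diagonal matrix against $(\bF\bF^H)^{-1}$ inside the trace selects only its diagonal entries $f_{\ell\ell}$. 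Hence \eqref{14b} collapses to $\sum_{\ell=1}^K f_{\ell\ell}\,q_{\pi^{-1}(\ell)}\,|c_\ell|^{-2}\le P_r$.

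Setting $x_\jmath=|c_\jmath|^2>0$ and $a_\jmath=f_{\jmath\jmath}\,q_{\pi^{-1}(\jmath)}$, the problem reads: minimize $\sum_\jmath w_\jmath x_\jmath$ subject to $\sum_\jmath a_\jmath/x_\jmath\le P_r$. Because both the objective and the constraint depend on $c_\jmath$ only through $|c_\jmath|^2$, the phase of each $c_\jmath$ is immaterial and I may take $c_\jmath$ real and positive with $c_\jmath=\sqrt{x_\jmath}$. I would then note that the reduced program is convex (linear objective, convex reciprocals in the constraint) and that its constraint must be active at optimality, since lowering any $x_\jmath$ strictly decreases the objective while only loosening the constraint. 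Forming the Lagrangian $\sum_\jmath w_\jmath x_\jmath+\lambda\bigl(\sum_\jmath a_\jmath/x_\jmath-P_r\bigr)$ and setting $\partial/\partial x_\jmath=0$ gives $x_\jmath^2=\lambda a_\jmath/w_\jmath$, that is $x_\jmath=\sqrt{\lambda a_\jmath/w_\jmath}$.

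To finish, I would substitute this $x_\jmath$ into the active constraint $\sum_\ell a_\ell/x_\ell=P_r$ to eliminate the multiplier: $\sqrt{\lambda}^{-1}\sum_\ell\sqrt{w_\ell a_\ell}=P_r$, hence $\sqrt{\lambda}=P_r^{-1}\sum_\ell\sqrt{w_\ell a_\ell}$. Back-substitution yields $x_\jmath=\sqrt{a_\jmath/w_\jmath}\,P_r^{-1}\sum_\ell\sqrt{w_\ell a_\ell}$, and recalling $a_\jmath=f_{\jmath\jmath}q_{\pi^{-1}(\jmath)}$ with $\imath=\pi^{-1}(\jmath)$ together with $c_\jmath^\infty=\sqrt{x_\jmath}$ reproduces \eqref{cin} verbatim. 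As an alternative I would bypass the multiplier and invoke Cauchy--Schwarz, $\bigl(\sum_\jmath\sqrt{w_\jmath a_\jmath}\bigr)^2\le\bigl(\sum_\jmath w_\jmath x_\jmath\bigr)\bigl(\sum_\jmath a_\jmath/x_\jmath\bigr)\le P_r\sum_\jmath w_\jmath x_\jmath$, whose equality condition $w_\jmath x_\jmath\propto a_\jmath/x_\jmath$ pins down the same minimizer.

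I do not anticipate a genuine obstacle here, as the reduced problem is an elementary convex program. The only points demanding care are the index bookkeeping---correctly reading $q_i$ in \eqref{cin} as $q_{\pi^{-1}(\jmath)}$ under the map $\jmath=\pi(\imath)$---and verifying that the off-diagonal entries of $(\bF\bF^H)^{-1}$ genuinely drop out because $\bP\bQ\bP^T$ and $\bC$ are diagonal, so that only the diagonal entries $f_{\ell\ell}$ survive in the constraint.
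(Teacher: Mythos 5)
Your proposal is correct and follows essentially the same route as the paper: specialize \eqref{pmse} to $\bB={\bf 0}$, observe that only $\sm^2\sum_\jmath w_\jmath|c_\jmath|^2$ depends on $\bC$ while the power constraint collapses to $\sum_\ell f_{\ell\ell}\,q_{\pi^{-1}(\ell)}|c_\ell|^{-2}\le P_r$, and solve the resulting scalar convex program via the KKT conditions. Your explicit Lagrangian elimination and the Cauchy--Schwarz alternative are just more detailed renditions of the paper's ``solving the KKT conditions'' step, and your index bookkeeping $q_i=q_{\pi^{-1}(\jmath)}$ matches \eqref{cin}.
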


\begin{proof}
Plugging in \eqref{asymp_high}, the weighted sum MSE in \eqref{opt_mse_x1a} can be rewritten as
\be
\mathcal{J} = \tr[\bW(\gm^2\bP(\bH^H\bH)^{-1}\bP^T +\sm^2\bC\bC^H)],
\ee
and the relay power consumption in \eqref{opt_mse_1b} is rewritten as
\be
\text{Tr}[\bG\left(\bH\bQ\bH^H+\gm^2\bI\right)\bG^H]&\approx& \text{Tr}[\bG\bH\bQ\bH^H\bG^H]\\
&=&\tr[(\bF\bF^H)^{-1}\bC\bP\bQ\bP^T\bC^H].
\ee
Then, Problem \eqref{opt_mse} can be minimized equivalently by solving
\begin{subequations} \label{hmse}
\begin{align}
\mathop\text{minimize}_{c_1,\cdots,c_K}  &\qquad \sum_{\ell=1}^K w_\ell|c_\ell|^2 \\
\text{subject to} &\qquad  \sum_{\ell=1}^K \frac{f_{\ell\ell}Q_{\pi^{-1}(\ell),\pi^{-1}(\ell)}}{|c_\ell|^2} \leq P_r,
\end{align}
\end{subequations}
where $c_\ell$ is the $\ell$th diagonal element of $\bC$. It can be readily verified that problem \eqref{hmse} is convex. Solving the KKT conditions \cite{boyd}, we obtain the solution in \eqref{cin}.
\end{proof}

\begin{Proposition}  \label{prop_high_rate_1}
For $\bB={\bf0}$, the optimal solution to \eqref{pmse2} is written as
\be \label{alpha_j}
c_\jmath= \frac{\gm}{\sm}(2h_{\i\i})^{\frac{1}{2}}\left(\left( 1+\frac{2\gm^2h_{\i\i}}{\lb\sm^2f_{\j\j}q_\j}\right)^{\frac{1}{2}}-1\right)^{-\frac{1}{2}}, \quad \jmath=1,\cdots,K,
\ee
where $\lambda$ is a scaling factor to guarantee that the relay transmits with its maximum power.
\end{Proposition}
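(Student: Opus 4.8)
The plan is to reduce the general high-SNR weighted sum rate problem \eqref{pmse2} to the non-PNC case by setting $\bB=\mathbf 0$, i.e.\ $b_\jmath=0$ for all $\jmath$, and then to solve the resulting problem in closed form through its KKT conditions. After substituting $b_\jmath=0$, the objective \eqref{60a} collapses term by term to $\frac12\sum_{\jmath=1}^{K}\log\frac{t_\jmath q_\jmath|c_\jmath|^{-2}}{\gm^2 h_{\imath\imath}|c_\jmath|^{-2}+\sm^2}$, and the power constraint \eqref{60b} reduces to $\sum_{\jmath=1}^{K} f_{\jmath\jmath}q_\jmath|c_\jmath|^{-2}\le P_r$. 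Two features make this tractable: the objective is separable across $\jmath$, and neither the objective nor the constraint depends on the phases of the $c_\jmath$ (with $b_\jmath=0$ there is no phase coupling of the kind handled in \eqref{phase_c}). Hence it suffices to optimize over the magnitudes, and I would introduce the change of variable $\eta_\jmath=|c_\jmath|^{-2}>0$, exactly as in the proof of Proposition \ref{prop_high_mmse_2}.

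Next, working in the variables $\eta_\jmath$, I would verify convexity directly (in the same spirit as Proposition \ref{prop_high_rate_2}(iii)): each summand $\log\eta_\jmath-\log(\gm^2 h_{\imath\imath}\eta_\jmath+\sm^2)$ is concave, since its second derivative $-\eta_\jmath^{-2}+(\gm^2 h_{\imath\imath})^2(\gm^2 h_{\imath\imath}\eta_\jmath+\sm^2)^{-2}$ is nonpositive whenever $\sm^2\ge 0$, while the constraint is linear in $\eta_\jmath$. Convexity guarantees that the KKT conditions are necessary and sufficient for the global optimum. Forming the Lagrangian with a single multiplier $\lb\ge 0$ for the power constraint and setting $\partial\mathcal L/\partial\eta_\jmath=0$, the separability yields one decoupled stationarity condition per index; clearing denominators turns it into the quadratic $\lb f_{\jmath\jmath}q_\jmath\gm^2 h_{\imath\imath}\,\eta_\jmath^2+\lb f_{\jmath\jmath}q_\jmath\sm^2\,\eta_\jmath-\tfrac12\sm^2=0$.

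Finally I would solve this quadratic. With $\lb>0$ its leading coefficient is positive and its constant term $-\tfrac12\sm^2$ is negative, so the two roots have opposite signs and exactly one is positive; retaining that root gives $\eta_\jmath=\frac{\sm^2}{2\gm^2 h_{\imath\imath}}\big(\sqrt{1+\tfrac{2\gm^2 h_{\imath\imath}}{\lb\sm^2 f_{\jmath\jmath}q_\jmath}}-1\big)$, and inverting through $|c_\jmath|=\eta_\jmath^{-1/2}$ reproduces precisely the claimed form \eqref{alpha_j}. The multiplier $\lb$ is then fixed by complementary slackness: the objective is strictly increasing in each $\eta_\jmath$, hence in the available relay power, so the constraint \eqref{60b} is active at the optimum and $\lb$ is the unique positive value for which $\sum_\jmath f_{\jmath\jmath}q_\jmath\,\eta_\jmath(\lb)=P_r$, matching its description as the scaling factor enforcing maximum relay power. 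The only points requiring care are establishing that the constraint is active (so $\lb>0$) and selecting the positive root of the quadratic; the remainder is routine algebra, so I anticipate no substantive obstacle.
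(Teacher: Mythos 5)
Your proposal is correct and follows essentially the same route as the paper: reduce \eqref{pmse2} with $\bB=\mathbf{0}$ to the separable problem with objective $\frac12\sum_\jmath\log\frac{t_\jmath q_\jmath|c_\jmath|^{-2}}{\gm^2 h_{\imath\imath}|c_\jmath|^{-2}+\sm^2}$ and linear power constraint in $\eta_\jmath=|c_\jmath|^{-2}$, then solve the KKT conditions. The paper states this last step in one line, whereas you carry out the algebra explicitly (the quadratic in $\eta_\jmath$, the choice of the positive root, and the activeness of the constraint fixing $\lb$), all of which checks out and reproduces \eqref{alpha_j}.
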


\begin{proof}
When $\sm^2\rightarrow 0$, the weighted sum rate \eqref{sum_rate} is approximated as
\bal \label{sum_rate_high2}
\mathcal{R}(\{c_\jmath\})\approx &\frac{1}{2}\sum_{\imath=1}^{K} \log \frac{t_j q_j|\bp_\imath^T\bF\bG\bh_\imath|^2} {\gamma^2|\bp_\imath^T\bF\bG|^2 +\sigma^2} \\
=&\frac{1}{2}\sum_{\jmath=1}^{K}  \log \frac{t_j q_j|c_{\jmath}|^{-2}}{\gamma^2h_{\imath\imath}|c_{\jmath}|^{-2}+\sigma^2},
\end{align}
The constraint of the relay power consumption \eqref{opt_rate_b} is approximated as
\bal
\tr[\bG\bH\bQ\bH^H\bG^H]=\sum_{\jmath=1}^K f_{\jmath\jmath} q_\jmath|c_\jmath|^{-2} \leq P_r, \label{relay_power_2}
\end{align}
We obtain the solution in \eqref{alpha_j} by solveing the KKT conditions.
\end{proof}

\bibliographystyle{IEEEtran}
\bibliography{MIMO_switch}

\begin{thebibliography}{10}
\providecommand{\url}[1]{#1}
\csname url@samestyle\endcsname
\providecommand{\newblock}{\relax}
\providecommand{\bibinfo}[2]{#2}
\providecommand{\BIBentrySTDinterwordspacing}{\spaceskip=0pt\relax}
\providecommand{\BIBentryALTinterwordstretchfactor}{4}
\providecommand{\BIBentryALTinterwordspacing}{\spaceskip=\fontdimen2\font plus
\BIBentryALTinterwordstretchfactor\fontdimen3\font minus
  \fontdimen4\font\relax}
\providecommand{\BIBforeignlanguage}[2]{{%
\expandafter\ifx\csname l@#1\endcsname\relax
\typeout{** WARNING: IEEEtran.bst: No hyphenation pattern has been}%
\typeout{** loaded for the language `#1'. Using the pattern for}%
\typeout{** the default language instead.}%
\else
\language=\csname l@#1\endcsname
\fi
#2}}
\providecommand{\BIBdecl}{\relax}
\BIBdecl

\bibitem{wms12isit}
F.~Wang, S.~C. Liew, and D.~Guo, ``Wireless {MIMO} switching with {MMSE}
  relaying,'' in \emph{Proc. IEEE Int'l Symp.\ Inform.\ Theory}, Jul. 2012.

\bibitem{Zhang06physical-layernetwork}
S.~Zhang, S.~C. Liew, and P.~P. Lam, ``Hot topic: {P}hysical-layer network
  coding,'' in \emph{Proc. of ACM Mobicom}, 2006.

\bibitem{Bor07}
B.~Rankov and A.~Wittneben, ``Spectral efficient protocols for half-duplex
  fading relay channels,'' \emph{IEEE J. Select. Area Commun.}, vol.~25, no.~2,
  pp. 379 -- 389, 2007.

\bibitem{rui09}
R.~Zhang, Y.-C. Liang, C.~C. Chai, and S.~Cui, ``Optimal beamforming for
  two-way multi-antenna relay channel with analogue network coding,''
  \emph{IEEE J. Select. Areas Commun.}, vol.~27, no.~5, pp. 699--712, Jun.
  2009.

\bibitem{ding10}
Z.~Ding, I.~Krikidis, J.~Thompson, and K.~Leung, ``Physical layer network
  coding and precoding for the two-way relay channel in cellular systems,''
  \emph{IEEE Trans. Signal Process.}, vol.~59, no.~2, 2011.

\bibitem{meixia11j}
R.~Wang and M.~Tao, ``Joint source and relay precoding designs for {MIMO}
  two-way relaying based on {MSE} criterion,'' \emph{IEEE Trans. Signal
  Process.}, vol.~60, no.~3, pp. 1352--1365, Mar. 2012.

\bibitem{yonghui}
C.~Sun, C.~Yang, Y.~Li, and B.~Vucetic, ``Transceiver design for multi-user
  multi-antenna two-way relay cellular systems,'' \emph{to appear in IEEE
  Trans. Commun.}, 2012.

\bibitem{soung12}
\BIBentryALTinterwordspacing
S.~C. Liew, S.~Zhang, and L.~Lu, ``Physical-layer network coding: Tutorial,
  survey, and beyond,'' \emph{Invited Paper; Elsevier Phycom, Special Issue of
  Physical Communication on ``Network Coding and Its Applications to Wireless
  Communications''}, May 2012. [Online]. Available:
  \url{http://dx.doi.org/10.1016/j.phycom.2012.05.002}
\BIBentrySTDinterwordspacing

\bibitem{xiaojun12twireless}
T.~Yang, X.~Yuan, and I.~B. Collings, ``Reduced-dimension cooperative precoding
  for {MIMO} two-way relay channels,'' \emph{IEEE Trans. Wireless Commun.},
  vol.~11, no.~11, pp. 4150--4160, Nov. 2012.

\bibitem{xiaojun13tcom}
T.~Yang, X.~Yuan, L.~Ping, I.~B. Collings, and J.~Yuan, ``A new physical-layer
  network coding scheme with eigen-direction alignment precoding for {MIMO}
  two-way relaying,'' \emph{IEEE Trans. Commun.}, vol.~PP, no.~99, pp. 1--14,
  Nov. 2013.

\bibitem{wms12jsac}
F.~Wang, S.~C. Liew, and D.~Guo, ``Wireless {MIMO} switching with zero forcing
  and network coding,'' \emph{IEEE J. Select. Areas Commun.}, vol.~30, no.~8,
  pp. 1452--1463, Sep. 2012.

\bibitem{Joung10}
J.~Joung and A.~Sayed, ``Multiuser two-way amplify-and-forward relay processing
  and power control methods for beamforming systems,'' \emph{IEEE Trans. Signal
  Process.}, vol.~58, no.~3, pp. 1833--1846, Mar. 2010.

\bibitem{ey10}
E.~Yilmaz, R.~Zakhour, D.~Gesbert, and R.~Knopp, ``Multi-pair two-way relay
  channel with multiple antenna relay station,'' in \emph{Proc. IEEE Int'l
  Conf. on Commun.}, May 2010, pp. 1--5.

\bibitem{moh09}
Y.~Mohasseb, H.~Ghozlan, G.~Kramer, and H.~El~Gamal, ``The {MIMO} wireless
  switch: Relaying can increase the multiplexing gain,'' in \emph{Proc. IEEE
  Int'l Symp.\ Inform.\ Theory}, 2009, pp. 1448--1552.

\bibitem{Cui08}
T.~Cui, T.~Ho, and J.~Kliewer, ``Space-time communication protocols for {N}-way
  relay networks,'' in \emph{Proc. IEEE GLOBECOM}, 2008, pp. 1--5.

\bibitem{aut10}
A.~U.~T. Amah and A.~Klein, ``Beamforming-based physical layer network coding
  for non-regenerative multi-way relaying,'' \emph{EURASIP J. Wireless Commun.
  and Network}, vol. 2010, p.~12, 2010.

\bibitem{guo12jsac}
Y.~E. Sagduyu, R.~A. Berry, and D.~Guo, ``Throughput and stability for
  relay-assisted wireless broadcast with network coding,'' \emph{IEEE J.
  Select. Areas Commun.}, 2013.

\bibitem{den09x}
D.~G{\"u}nd{\"u}z, A.~Yener, A.~J. Goldsmith, and H.~V. Poor, ``The multi-way
  relay channel,'' \emph{submitted to IEEE Trans. Inform. Theory, CoRR}, vol.
  abs/1004.2434, 2010.

\bibitem{Cadambe12}
V.~R. Cadambe, ``Multi-way relay based deterministic broadcast with side
  information: Pair-wise network coding is sum-capacity optimal,'' in
  \emph{Proc. Conf. Inform. Sci. Sys. (CISS)}, Mar. 2012.

\bibitem{Ong11}
L.~Ong, S.~Johnson, and C.~Kellett, ``The capacity region of multiway relay
  channels over finite fields with full data exchange,'' \emph{IEEE Trans.
  Inform. Theory}, vol.~57, no.~5, pp. 3016 --3031, May 2011.

\bibitem{Ong12}
L.~Ong, C.~M. Kellett, and S.~J. Johnson, ``On the equal-rate capacity of the
  {AWGN} multiway relay channel,'' \emph{IEEE Trans. Inform. Theory}, vol.~58,
  no.~9, pp. 5761--5769, Sep. 2012.

\bibitem{roger90}
R.~A. Horn and C.~R. Johnson, \emph{Matrix Analysis}.\hskip 1em plus 0.5em
  minus 0.4em\relax Cambridge University Press, 1990.

\bibitem{Joham05}
M.~Joham, W.~Utschick, and J.~Nossek, ``Linear transmit processing in {MIMO}
  communications systems,'' \emph{IEEE Trans. Signal Proc.}, vol.~53, no.~8,
  pp. 2700 -- 2712, Aug. 2005.

\bibitem{dimi99}
D.~P. Bertsekas, \emph{Nonlinear Programming: 2nd Edition}.\hskip 1em plus
  0.5em minus 0.4em\relax Athena Scientific, 1999.

\bibitem{kay01}
S.~M. Kay, \emph{Fundamentals Of Statistical Signal Processing}.\hskip 1em plus
  0.5em minus 0.4em\relax Prentice Hall, 2001.

\bibitem{cover_bk}
T.~M. Cover and J.~A. Thomas, \emph{Elements of Information Theory}.\hskip 1em
  plus 0.5em minus 0.4em\relax New York: John Wiley \& Sons, Inc., 1991.

\bibitem{cioffi08}
S.~Christensen, R.~Agarwal, E.~Carvalho, and J.~Cioffi, ``Weighted sum-rate
  maximization using weighted {MMSE} for {MIMO-BC} beamforming design,''
  \emph{IEEE Trans. Wireless Commun.}, vol.~7, no.~12, pp. 4792--4799, Dec.
  2008.

\bibitem{xianda10}
X.~Wang and X.-D. Zhang, ``Linear transmission for rate optimization in {MIMO}
  broadcast channels,'' \emph{IEEE Trans Wireless Commun.}, vol.~9, no.~10, pp.
  3247--3257, Oct. 2010.

\bibitem{matrix}
G.~H. Golub and C.~F.~V. Loan, \emph{Matrix Computations}.\hskip 1em plus 0.5em
  minus 0.4em\relax Johns Hopkins University Press, 1996.

\bibitem{boyd}
S.~Boyd and L.~Vandenberghe, \emph{Convex Optimization}.\hskip 1em plus 0.5em
  minus 0.4em\relax Cambridge University Press, 2004.

\bibitem{cubic}
M.~Abramowitz and I.~A. Stegun, \emph{Handbook of Mathematical Functions: with
  Formulas, Graphs, and Mathematical Tables}.\hskip 1em plus 0.5em minus
  0.4em\relax Dover, 1965.

\end{thebibliography}

\end{document}